\newtheorem{thm}{Theorem}[section]
\newtheorem{lem}[thm]{Lemma}
\theoremstyle{definition}
\newtheorem{exam}[thm]{Example}
\theoremstyle{remark}
\newtheorem{rem}[thm]{Remark}
\numberwithin{equation}{section}
\newcommand{\RR}{\mathbb{R}}                
\newcommand{\espace}{\mathcal{E}}           
\newcommand{\body}{\mathcal{B}}             
\newcommand{\Cinf}{\mathrm{C}^{\infty}}     
\newcommand{\Met}{\mathrm{Met}}             
\newcommand{\Emb}{{\mathrm{Emb}^{\infty}}}             
\newcommand{\bepsilon}{{\bm{\epsilon}}}
\newcommand{\bgamma}{{\bm{\gamma}}}
\newcommand{\bpi}{\bm{\pi}}
\newcommand{\bsigma}{{\bm{\sigma}}}
\newcommand{\btau}{{\bm{\tau}}}
\newcommand{\btheta}{{\bm{\theta}}}
\newcommand{\bbeta}{\bm{\eta}}
\newcommand{\bxi}{{\pmb{\xi}}}
\newcommand{\norm}[1]{\left\Vert#1\right\Vert}
\newcommand{\set}[1]{\left\{#1\right\}}
\newcommand{\bq}{\mathbf{q}}                
\newcommand{\vol}{\mathrm{vol}}             
\newcommand{\bt}{\mathbf{t}}
\newcommand{\XX}{\mathbf{X}}
\newcommand{\bC}{\mathbf{C}}
\newcommand{\bS}{\mathbf{S}}
\newcommand{\bT}{\mathbf{T}}
\newcommand{\bP}{\mathbf{P}}
\newcommand{\msigma}{{\pmb{\mathfrak{S}}}}
\newcommand{\mpi}{{\pmb{\pi}}}
\newcommand{\bN}{\bm{N}}
\newcommand{\bb}{\mathbf{b}}
\newcommand{\bd}{\mathbf{d}}
\newcommand{\nn}{\bm{n}}
\newcommand{\vv}{\bm{v}}
\newcommand{\ww}{\bm{w}}
\newcommand{\xx}{\mathbf{x}}
\newcommand{\yy}{\mathbf{y}}
\newcommand{\VV}{\bm{V}}                
\newcommand{\UU}{\bm{U}}                
\newcommand{\uu}{\bm{u}}                
\newcommand{\pp}{p}                     
\newcommand{\ee}{\bm{e}}                
\newcommand{\ff}{\bm{f}}                
\newcommand{\bF}{\mathbf{F}}            
\newcommand{\id}{\mathrm{id}}
\newcommand{\Id}{\mathbf{Id}}
\DeclareMathOperator{\Lie}{L} %
\DeclareMathOperator{\tr}{tr} %
\DeclareMathOperator{\grad}{grad} %
\DeclareMathOperator{\dive}{div} %
\DeclareMathOperator{\2dots}{:}
\renewcommand{\vec}{\pmb}
\begin{document}

\title[Pressure potential and non-holonomic constraints]{An intrinsic geometric formulation of hyper-elasticity, pressure potential and non-holonomic constraints}

\author{B. Kolev}
\address[Boris Kolev]{Université Paris-Saclay, ENS Paris-Saclay, CNRS,  LMT - Laboratoire de Mécanique et Technologie, 91190, Gif-sur-Yvette, France}
\email{boris.kolev@math.cnrs.fr}

\author{R. Desmorat}
\address[Rodrigue Desmorat]{Université Paris-Saclay, ENS Paris-Saclay, CNRS,  LMT - Laboratoire de Mécanique et Technologie, 91190, Gif-sur-Yvette, France}
\email{rodrigue.desmorat@ens-paris-saclay.fr}

\date{August 18, 2021}%
\subjclass[2020]{74B20; 74A05; 74A20; 58A10}
\keywords{Finite strain theory; Manifold of Riemannian metrics; Potential for boundary conditions; Poincaré formula; Non-holonomic constraints}%
\begin{abstract}
  Isotropic hyper-elasticity, altogether with the equilibrium equations and the usual boundary conditions, are formulated directly on the body $\body$, a three-dimensional compact and orientable manifold with boundary equipped with a mass measure. Pearson–Sewell–Beatty pressure potential on the boundary is recovered, using the Poincaré formula. The existence of such a potential requires conditions, which are formulated as non-holonomic constraints on the configuration space.
\end{abstract}

\maketitle


\section{Introduction}
\label{sec:intro}

The finite strain theory for solids requires the comparison of the deformed configuration with a reference configuration, usually the initial unloaded structure. These configurations are described using mappings from some abstract manifold with boundary, the \emph{body} $\body$, into the \emph{ambient (usually Euclidean) space} $\espace$~\cite{TN1965,SH1997}.
As mentioned by Truesdell and Noll~\cite{Nol1959,Nol1972} and later by several authors~\cite{WT1973,Rou1980,Hau2002,Dim2011,Ber2012,GLM2014,Ste2015}, the body does not have, from the pure differential geometry point of view, to be embedded in space and identified with some reference configuration. This point was emphasized by Noll~\cite{Nol1972,Nol1978}, who called a formulation of continuum mechanics on the body $\body$ as \emph{intrinsic}\footnote{The term \emph{intrinsic} can have additional meanings in geometry:  for instance ``coordinate free'' or, in surface theory, ``depending only on the metric and not on the embedding''.} and by Rougée~\cite{Rou1980,Rou1991} who refers to it as an \emph{intrinsic Lagrangian framework}, whereas a formulation on a reference configuration is denominated as a \emph{standard Lagrangian approach}. Intrinsic formulations nowadays use modern tools in differential geometry~\cite{KOS2017,JdLE2018,EJdL2019,JdLE2020,SE2020}. In this direction, we insist on the fundamental role played by the \emph{manifold of all the Riemannian metrics on the body} --- introduced in solids mechanics by Rougée~\cite{Rou1980,Rou1997,Rou2006} and Fiala~\cite{Fia2004,Fia2011,Fia2016} --- in the formulation of hyper-elasticity (see \autoref{sec:hyper-elasticity}).

Even if not necessary \textit{a priori} in practice, one must be able to formulate the finite strain theory, including the boundary conditions and the principle of virtual power, entirely on the body $\body$, not necessarily embedded in space. Meanwhile, following Epstein and Segev \cite{ES1980,Seg1986}, we are led to consider virtual powers $\mathcal{W}$ as one-forms on the configuration space (the space of embeddings), and therefore to formulate continuum mechanics in the framework of differential geometry in infinite dimension~\cite{Arn1965,Ham1982,Olv1993,KM1997}.

\emph{Hyper-elasticity} with prescribed displacement and conservative loads is usually formulated on a reference configuration as a variational problem, with an explicit Lagrangian (a potential energy) depending on the deformation $\varphi$ --- of the reference configuration into the deformed one --- and on its first derivatives (see~\cite{Bal1976/77,Cia1988,MH1994}). The further question of the existence and the determination of a Lagrangian (a potential) for the prescribed pressure boundary condition in finite strain theory has arisen. A partial answer has been given by Pearson~\cite{Pea1956} when studying elastic beams instabilities, and later improved by Sewell~\cite{Sew1965,Sew1967} and Beatty~\cite{Bea1970} (see also~\cite{PC1991}). A simplified, less general, formulation has been summarized by Ball~\cite{Bal1976/77}. Contrary to the case of small strains and displacements, in finite strain theory, some extra conditions are required for a pressure potential to exist. Such conditions have been formulated on a reference configuration in~\cite{Bea1970}, with no clue of how to recast them on $\body$.

Formulating the problem in an intrinsic manner, meaning on the body, was initiated by Rougée~\cite{Rou1997,Rou2006}, who defined an elasticity law as a vector field on the manifold of Riemannian metrics on the body. However, the formulation of the boundary conditions on $\body$ rather than on a reference configuration in ambient space seems to remain an open problem in the general case~\cite{Seg1986,GLM2014,Fia2016}.

In the present work, we address this question for the following common boundary conditions: the standard \emph{dead load}, and the \emph{prescribed pressure}. The novelty is that we recast these boundary conditions on the body $\body$, rather than, as is usually done, on the reference configuration. To achieve this task, we were lead to recast the boundary conditions using the general formalism of differential forms on manifolds, bypassing expressions which are limited to calculus in affine $3$-space. Besides, to produce a potential for the virtual work of these surface forces, the crucial concept we have used, is an infinite dimensional extension of a linear operator $K$, that we have called the \emph{Poincaré integrator}. This operator was initially defined to prove the \emph{Poincaré lemma}, which states that a \emph{closed} form $\alpha$ on $\RR^{n}$, \textit{i.e.}, $d\alpha = 0$, is always \emph{exact}, \textit{i.e.}, that there exists $\beta$ such as $\alpha=d\beta$. The operator $K$ was built to satisfy the following formula (see~\autoref{sec:differential-forms})
\begin{equation*}
  dK+Kd=\id,
\end{equation*}
where $d$ is the exterior derivative. Indeed, writing
\begin{equation*}
  d(K\alpha) = \alpha - Kd\alpha,
\end{equation*}
then, we get immediately that $\beta=K\alpha$ is the sought solution of $\alpha=d\beta$ if $d\alpha=0$. This construction is the right framework to solve the problem of finding a potential $\beta = \mathcal{L}$ (a functional on the manifolds of embeddings) corresponding to some virtual work $\alpha = \mathcal{W}$ (a one-form on the manifolds of embeddings). The existence of such a potential requires however an integrability condition $d\alpha = 0$, which is not satisfied in the prescribed pressure case. Anyway, if $d\alpha \ne 0$ (a condition on a two-form), then, the condition $Kd\alpha = 0$ (a condition on a one-form) can be used to formulate geometric integrability conditions, that we call \emph{non-holonomic constraints}, since they involve both the embedding and its first variation. If this condition is satisfied, then $\mathcal{L} := K\mathcal{W}$ can be considered as a potential for the virtual work $\mathcal{W}$, for such admissible virtual displacements, \textit{i.e.}, satisfying $Kd\mathcal{W} = 0$. When applied to the virtual work of prescribed pressure terms, these non-holonomic conditions will appear to be slightly more general than the original Beatty conditions~\cite{Bea1970}. Furthermore, this proposed geometric formulation seems to be new and to have not yet been used so far for the problem of boundary conditions considered in this paper. It is moreover systematic and could be used for other mechanical problems.

The outline of the paper is as follows. In~\autoref{sec:kinematics}, we recall the general geometric framework of continuum mechanics. In~\autoref{sec:hyper-elasticity}, we present a synthetic and revisited version of Rougée geometric theory of elasticity. In~\autoref{sec:virtual-power}, we recast the virtual powers as one-forms on the space of embeddings (defined on the body). The general techniques issued from the Poincaré lemma and the definition of the Poincaré integrator are provided in~\autoref{sec:Lagrangian-formulations}. Here, we explain how to formulate Lagrangians (potentials) for surface forces (boundary conditions) and illustrate the proposed general method through the simple example of \emph{dead loads}. The potential obtained is close to the one derived in~\cite{GLM2014}, but with the notable difference that our formulation of the displacement uses the affine structure of space. The more complicated problem of \emph{prescribed pressure} is discussed in~\autoref{sec:prescribed-pressure}, both on the body and on the reference configuration. In addition, five appendices have been added to provide the required technical details.

\section{Kinematics of finite strains}
\label{sec:kinematics}

In continuum mechanics, the ambient space $\espace$ is represented by an oriented three-dimensional Euclidean affine space. Denoting by $\bq$ the Euclidean metric on $\espace$, it is better to consider this space as a Riemannian manifold $(\espace,\bq)$ and forget, as far as possible, this affine structure of space except the fact that the tangent bundle $T\espace=\espace \times E$ is trivial, where $E$ is the vector space associated to $\espace$ (called the \emph{translation space} in~\cite{Nol1978}).

The material medium is parameterized by a three-dimensional compact and orientable manifold with boundary, noted $\body$, the \emph{body}. This manifold $\body$ is equipped with a \emph{volume form} $\mu$, the \emph{mass measure}~\cite{TN1965} and is thus orientable.

\begin{rem}
  It is common, when possible, to refer to \emph{material coordinates}, \textit{i.e.}, defined on the body $\body$, as uppercase letters $\XX$ while \emph{spatial coordinates}, \textit{i.e.}, defined on the space $\espace$, are represented by lowercase letters $\xx$.
\end{rem}

A \emph{configuration} of a material medium is represented by a smooth orientation-preserving \emph{embedding} (particles cannot occupy the same point in space)
\begin{equation*}
  \pp : \body \to \espace,
\end{equation*}
sometimes referred to as a \emph{placement} in mechanics and its image $\Omega_{\pp} = \pp(\body)$ is usually called a \emph{configuration system}. The \emph{configuration space} in continuum mechanics is thus the set, noted $\Emb(\body,\espace)$, of smooth orientation-preserving embeddings of $\body$ in $\espace$ (in this paper, all the embeddings are assumed to be smooth, in \cite{Seg1986,SE2020}, they are assumed to be $C^p$, $p \geq 1$). This set can be endowed with a differential manifold structure of infinite dimension~\cite{IKT2013} (a \emph{Fréchet} manifold). Since $\espace$ is an affine space, $\Emb(\body,\espace)$ can be considered as an open set of a \emph{Fréchet topological vector space} (see~\autoref{sec:Frechet-topology}).

The tangent space to $\Emb(\body,\espace)$ at a point $\pp \in \Emb(\body,\espace)$ is described as follows. Let $\pp(t)$ be a smooth curve in $\Emb(\body,\espace)$ such that $\pp(0) = \pp$ and $(\partial_{t}p)(0) = \VV$, then the tangent space at $\pp \in \Emb(\body,\espace)$ is the set
\begin{equation*}
  T_{\pp}\Emb(\body,\espace) = \set{\VV \in \Cinf(\body,T\espace);\; \pi \circ \VV = \pp},
\end{equation*}
where $\VV$ is described by the following diagram:
\begin{equation*}
  \xymatrix{
    T\body \ar[r]^{T\pp} \ar[d]_{\pi}   & T\espace \ar[d]^{\pi} \\
    \body \ar[ur]^{\VV} \ar[r]^{\pp}    & \espace }
\end{equation*}
We recognize $\VV$ as a \emph{Lagrangian velocity}. The tangent bundle $T\Emb(\body,\espace)$ of the configuration space $\Emb(\body,\espace)$ is thus the set of \emph{virtual Lagrangian velocities}.

\begin{rem}\label{TEmb-trivial}
  Since $T\espace=\espace \times E$ is trivial, the tangent bundle $T\Emb(\body,\espace)$ is trivial and isomorphic to $\Emb(\body,\espace) \times C^\infty(\body, E)$ (see~\cite[p. 107]{EM1970a}).
\end{rem}

A motion in continuum mechanics corresponds to a smooth curve $\pp(t)$ in $\Emb(\body,\espace)$ (a \emph{path of embeddings}). To this motion is associated its Lagrangian velocity
\begin{equation*}
  \partial_{t}p(t,\XX) = \VV(t,\XX)
\end{equation*}
and its (right) Eulerian velocity
\begin{equation*}
  \uu(t,\xx) = \VV(t,\pp^{-1}(t,\xx)).
\end{equation*}
which is, at each time $t$ a vector field on $\Omega_{\pp(t)} = \pp(t)(\body)$, simply noted $\Omega_{\pp} = \pp(\body)$. But one can also introduces the \emph{left Eulerian velocity}
\begin{equation*}
  \UU(t,\XX) := (T\pp^{-1}. \VV)(t,\XX).
\end{equation*}
which is a vector field on $\body$ (see for instance~\cite{Arn1966}). These two vector fields are better described by the following diagram:
\begin{equation*}
  \xymatrix{
  T\body \ar[r]^{T\pp} \ar[d]_{\pi}   & T\Omega_{\pp} \ar[d]^{\pi}
  \\
  \body \ar@/^1pc/[u]^{\UU} \ar[ur]^{\VV} \ar[r]^{\pp}    & \Omega_{\pp}  \ar@/_1pc/[u]_{\uu}
  }
\end{equation*}

\begin{rem}
  Introducing the notions of \emph{pullback}/\emph{pushforward} (see~\autoref{sec:pullback}), left and right Eulerian velocities are related to each other as follows,
  \begin{align*}
    \UU & = \pp^{*}\uu = T\pp^{-1} \circ \uu \circ \pp, &  & \text{($\UU$ is the pullback of $\uu$)},
    \\
    \uu & = \pp_{*}\UU = T\pp \circ \UU \circ \pp^{-1}, &  & \text{($\uu$ is the pushforward of $\UU$)}.
  \end{align*}
\end{rem}

It is common to introduce a \emph{reference configuration} $\Omega_{0} := \pp_{0}(\body)$ with $\xx_{0}=\pp_{0}(\XX)$. This allows for the definition of the mapping
\begin{equation*}
  \varphi := \pp \circ \pp_{0}^{-1}, \qquad \Omega_{0} \to \Omega_{\pp},
\end{equation*}
usually called the \emph{deformation}~\cite{Eri1962,GZ1968,TN1965,WT1973}. This means that $\xx=\pp(\XX)=\varphi(\xx_{0})$ are the Eulerian coordinates (on the deformed configuration $\Omega_{\pp}$), whereas $\xx_{0}$ and $\XX$ are the Lagrangian coordinates, respectively on the reference configuration $\Omega_{0}$ and the body $\body$.

The linear tangent mappings $T\pp_{0} : T\body \to T\Omega_{0}$, $T\pp : T\body \to T\Omega_{\pp}$ and $T\varphi : T\Omega_{0} \to T\Omega_{\pp}$, will be denoted respectively by $\bF_{0} := T\pp_{0}$, $\bF := T\pp$, and $\bF_{\varphi} := T\varphi$. In local coordinate systems, these linear mappings are represented respectively by the following matrices
\begin{equation*}
  \left(\frac{\partial {\pp_{0}}^{i}}{\partial X^{J}} \right),
  \qquad
  \left(\frac{\partial \pp^{i}}{\partial {X}^{J}} \right),
  \qquad
  \left(\frac{\partial \varphi^{i}}{\partial {x_{0}}^{j}} \right).
\end{equation*}

The \emph{displacement}, either defined on the reference configuration $\Omega_{0}=\pp_{0}(\body)$ as
\begin{equation}\label{eq:xi0}
  \bxi(\varphi)= \varphi - \id,
\end{equation}
or on the body $\body$, as
\begin{equation}\label{eq:xiB}
  \bxi(\pp) = ( \varphi - \id) \circ \pp_{0} =\pp - \pp_{0},
\end{equation}
can be considered as a \emph{vector valued function}, with values in $E$, since $\pp(\XX)-\pp_{0}(\XX)$ belongs to $E$, the vector space associated to $\espace$. More generally, an embedding $\pp: \body \to \espace$ can be considered as a vector valued function $\pp: \body \to E$, as soon as we have fixed an origin in space (this choice is not necessary for the displacement which is canonically a vector valued function). Here, we strongly use the affine structure of space $\espace$ which simplifies the expression of the displacement which is, by the way, a vector field on $\Emb(\body,\espace)$, as justified in~\cite{GLM2014} where the rigorous formulation of the displacement requires, then, the non-trivial use of the \emph{Riemannian exponential mapping} on $\espace$ (considered as a general Riemannian manifold with no affine structure).

\subsection{Metric states}
\label{subsec:metric-states}

To each embedding $\pp$ corresponds, by pullback of the Euclidean metric $\bq$, a Riemannian metric
\begin{equation*}
  \bgamma = \pp^{*}\bq
\end{equation*}
on the body $\body$. Note that the Riemannian curvature of $\bgamma$ vanishes when $\body$ is of dimension 3 (this is no longer true in shell theory, where $\body$ is a manifold of dimension 2). Indeed, the Riemannian curvature tensor $R$ is covariant to the metric, meaning that if $\phi$ is a diffeomorphism between two manifolds $\body$ and $\Omega$, and $g$ is a metric on $\Omega$, then, $R(\phi^{*}g) = \phi^{*}R(g)$. In particular, for $\phi=\pp : \body \to \Omega_{\pp}$, we get $R(\bgamma) = R(\pp^{*}\bq) = \pp^{*}R(\bq) = 0$.

For both fluids and solids, the state of stress is determined by the ``metric state''~\cite{TN1965,Nol1972}. It is however \emph{necessary to introduce a reference configuration $\pp_{0}$ to formalize geometrically hyper-elasticity of solids}, which leads to a reference Riemannian metric
\begin{equation*}
  \bgamma_{0} = {\pp_{0}}^{*}\bq.
\end{equation*}

In classical theory of finite strains, two strain tensors play a fundamental role:
\begin{enumerate}
  \item the (covariant) \emph{right Cauchy--Green tensor}
        \begin{equation*}
          \bC := \varphi^{*}\bq = \bq\bF_{\varphi}^{t}\bF_{\varphi},
        \end{equation*}
        defined as the \emph{pullback} to $\Omega_{0}$ of the Euclidean metric $\bq$ on $\Omega_{\pp}$,
  \item the (contravariant) \emph{left Cauchy--Green tensor}
        \begin{equation*}
          \bb := \varphi_{*} \bq^{-1} = \bF_{\varphi}\bF_{\varphi}^{t}\bq^{-1},
        \end{equation*}
        defined as the \emph{pushforward} to $\Omega_{\pp}$ of the inverse Euclidean metric $\bq^{-1}$ on $\Omega_{0}$,
\end{enumerate}
where $(\cdot)^{t}$ denotes the transpose relative to the metric $\bq$.

The relations between the two Cauchy--Green tensors and the two metric tensors $\bgamma$ and $\bgamma_{0}$, defined on the body $\body$ are the following
\begin{equation*}
  {\pp_{0}}^{*}\bC = \pp^{*}\bq := \bgamma , \quad \text{and} \quad p^{*}\bb = {\pp_{0}}^{*}\bq^{-1} := \bgamma_{0}^{-1}.
\end{equation*}

\begin{rem}\label{rem:mixedCb}
  Using the reference metric tensor $\bgamma_{0}$, we can then build the mixed tensor $\bgamma_{0}^{-1}\bgamma$. Then, $\widehat \bC=\bq^{-1}\bC$ and $\widehat \bb=\bb\bq$ are related to each other by
  \begin{equation}\label{eq:mixedCb}
    \bgamma_{0}^{-1}\bgamma=\pp_{0}^{*}\, \widehat \bC=\pp^{*}\, \widehat \bb.
  \end{equation}
\end{rem}

\subsection{Strain rate}
\label{subsec:strain-rate}

Traditionally, the \emph{strain rate} (on $\Omega_{\pp}$) is defined by the mixed tensor
\begin{equation*}
  \widehat{\bd} := \frac{1}{2} \left( \nabla \uu + (\nabla \uu)^{t}\right),
\end{equation*}
where $\nabla \uu$ is the covariant derivative of the Eulerian velocity $\uu$ and $(\nabla \uu)^{t}$
means the transpose (relative to the metric $\bq$) of the linear operator $\ww \mapsto \nabla_{\ww} \uu$. It appears, however, that it is more interesting to introduce its covariant form
\begin{equation*}
  \bd := \bq \widehat{\bd} = \frac{1}{2} \Lie_{\uu} \bq,
\end{equation*}
where $\Lie_{\uu}$ means the Lie derivative with respect to $\uu$. We get then
\begin{equation}\label{eq:pbd}
  \varphi^{*}\bd = \frac{1}{2} \partial_{t} \bC, \qquad \pp^{*}\bd = \frac{1}{2} \partial_{t} \bgamma,
\end{equation}
which are direct consequences of lemma~\ref{lem:magic-formula} (see also~\cite{Nol1972,Rou1980,Rou2006}).

\subsection{Mass measure}
\label{subsec:mass-measure}

Let $\mu$ be the mass measure on the body $\body$, that we assume to be given by a volume form, \textit{i.e.}, a nowhere vanishing 3-form, and not as a general Borel measure as in~\cite{Nol1978}. To each embedding $\pp$ corresponds thus a volume form $\pp_{*}\mu$ on the configuration $\Omega_{\pp} = \pp(\body)$. Besides, the Riemannian metric $\bq$ induces on $\Omega_{\pp}$ a volume form, noted $\vol_{\bq}$ (see~\autoref{sec:volume-forms}), which can be written as
\begin{equation*}
  \vol_{\bq}=dx^{1} \wedge dx^{2} \wedge dx^{3},
\end{equation*}
in each direct orthonormal coordinate system $(x^{i})$ of the oriented affine Euclidean space $\espace$.
Since two volume forms are always proportional to each other, we get
\begin{equation}\label{eq:mass-density}
  \pp_{*}\mu = \rho \, \vol_{\bq},
\end{equation}
which defines the \emph{mass density} $\rho=\pp_{*}\mu / \vol_{\bq}$ as a scalar function on the space domain $\Omega_{\pp}$.
Similarly, a density $\rho_{0}=\pp_{0 *}\mu / \vol_{\bq}$ is defined on the reference configuration $\Omega_{0}$. The conservation of mass is obtained using the identity
\begin{equation*}
  \mu = {\pp_{0}}^{*}(\rho_{0} \, \vol_{\bq}) = \pp^{*}(\rho \, \vol_{\bq}),
\end{equation*}
from which we deduce,
\begin{equation}\label{eq:mass-conservation}
  \rho_{0} \, \vol_{\bq} = {\pp_{0}}_{*}\pp^{*}(\rho \, \vol_{\bq}) = \varphi^{*}(\rho \, \vol_{\bq}) = (\varphi^{*}\rho)J_{\varphi}\, \vol_{\bq},
  \qquad
  \begin{cases}
    J_{\varphi}=\det \bF_\varphi,
    \\
    (\varphi^{*}\rho)=\rho\circ \varphi.
  \end{cases}
\end{equation}
Its infinitesimal form
\begin{equation*}
  \partial_{t} \rho + \nabla_{\uu} \rho + \rho \dive \uu = 0
\end{equation*}
is obtained by deriving along a path of embeddings $\pp(t)$ and using lemma~\ref{lem:magic-formula}. Indeed, we get
\begin{equation*}
  \partial_{t}\mu = 0 = \partial_{t}\left(\pp^{*}(\rho \vol_{\bq}) \right) = \pp^{*} \left( \partial_{t}(\rho \vol_{\bq}) + \Lie_{\uu}(\rho\vol_{\bq}) \right),
\end{equation*}
where $\Lie_{\uu}$ is the Lie derivative relative to $\uu$, and from which we deduce that
\begin{equation*}
  (\partial_{t} \rho) \vol_{\bq} + (\Lie_{\uu}\rho) \vol_{\bq} + \rho \Lie_{\uu}\vol_{\bq} = \left(\partial_{t}\rho + \nabla_{\uu} \rho + \rho \dive \uu \right)  \vol_{\bq} = 0.
\end{equation*}

The Riemannian volume forms $\vol_{\bgamma}$ and $\vol_{\bgamma_{0}}$ and the mass measure $\mu$ are all proportional. They are related by
\begin{equation*}
  \mu = (\pp^{*}\rho) \vol_{\bgamma} = (\pp_{0}^{*}\, \rho_{0}) \vol_{\bgamma_{0}},
\end{equation*}
which leads us to define two mass densities on the body
\begin{equation*}
  \rho_\bgamma := \pp^{*}\rho=\rho \circ \pp, \qquad  \rho_{\bgamma 0} := \pp_{0}^{*}\, \rho_{0} = \rho_{0} \circ \pp_{0}.
\end{equation*}

\section{The Rougée geometric formulation of hyper-elasticity revisited}
\label{sec:hyper-elasticity}

Rougée suggested in~\cite{Rou1997,Rou2006} to define an elasticity law as a vector field on $\Met(\body)$, the manifold of all the Riemannian metrics on the body $\body$, in other words as a section
\begin{equation*}
  F: \Met(\body) \to T\Met(\body)
\end{equation*}
of the tangent vector bundle $T\Met(\body)$. It turns out that $\Met(\body)$ is an \emph{open convex set} of the infinite dimensional vector space
\begin{equation*}
  \Gamma(S^{2}T^{\star}\body),
\end{equation*}
of smooth covariant symmetric second-order tensors fields (see~\autoref{sec:Frechet-topology}). The tangent space $T_{\bgamma}\Met(\body)$ is thus canonically identified with the vector space $\Gamma(S^{2}T^{\star}\body)$. This space can be interpreted as the space of virtual deformation tensor fields $\bepsilon$ (linearized deformations around metric state $\bgamma$). The tangent vector bundle $T\Met(\body)$ is trivial and can be written as
\begin{equation*}
  T\Met(\body) = \Met(\body) \times \Gamma(S^{2}T^{\star}\body).
\end{equation*}

The cotangent vector space $T_{\bgamma}^{\star}\Met(\body)$ is here a space of \emph{tensor-distributions}, as introduced by Lichnerowicz in~\cite{Lic1994}. Tensor-distributions extend \emph{Schwartz distributions} from functions to tensor fields. Hence, $T_{\bgamma}^{\star}\Met(\body)$ is the topological dual of $\Gamma(S^{2}T^{\star}\body)$ (for the Fréchet topology defined by the $C^{k}$ semi-norms, see~\autoref{sec:Frechet-topology})). $T_{\bgamma}^{\star}\Met(\body)$ is thus interpreted as the space of \emph{virtual powers of internal forces} in a very general sense.

\begin{rem}
  Recall that each pullback metric $\gamma=\pp^{*}\bq$ has vanishing curvature (see~\autoref{subsec:metric-states}), so that the set of all pullback metrics is a strict subset of $\Met(\body)$.
\end{rem}

The manifold $\Met(\body)$ of Riemannian metrics on $\body$ can be equipped with a (weak) Riemannian structure, by setting
\begin{equation}\label{eq:Rougee-metric}
  G^{\mu}_{\bgamma}(\bepsilon^{1}, \bepsilon^{2}) := \int_{\body} \tr(\bgamma^{-1}\bepsilon^{1} \bgamma^{-1}\bepsilon^{2}) \, \mu, \qquad \bepsilon^{1}, \bepsilon^{2} \in T_{\bgamma}\Met(\body),
\end{equation}
where $\tr(\bgamma^{-1}\bepsilon^{1} \bgamma^{-1}\bepsilon^{2}) = \gamma^{ij}\varepsilon^{1}_{ik} \gamma^{kl}\varepsilon^{2}_{jl}$, in a local coordinate system (note however that the trace $\tr(\bgamma^{-1}\bepsilon^{1} \bgamma^{-1}\bepsilon^{2})$ is intrinsic).

Riemannian structures on the manifold of Riemannian metrics have been extensively studied, see for instance~\cite{Ebi1968,FG1989,GM1991,Cla2010}. The metric~\eqref{eq:Rougee-metric} was introduced by Rougée~\cite{Rou1997,Rou2006} (see also \cite{Fia2004}) and seems well adapted for the geometrical formulation of several concepts in solid mechanics. For this reason, we will call it the \emph{Rougée metric}. This metric induces a linear injective (but not surjective) mapping
\begin{equation*}
  T_{\bgamma}\Met(\body) \to T_{\bgamma}^{\star}\Met(\body), \qquad \bbeta \mapsto G^{\mu}_{\bgamma}(\bbeta,\cdot).
\end{equation*}

The range of this mapping in $T_{\bgamma}^{\star}\Met(\body)$ corresponds to \emph{virtual powers with density}. In other words, an element $\mathcal{P}_{\bgamma}$ belongs to this range if it can be written as
\begin{equation*}
  \mathcal{P}_{\bgamma}(\bepsilon) = \int_{\body} (\btheta : \bepsilon) \mu, \quad \text{where} \quad \btheta = \bgamma^{-1} \bbeta \bgamma^{-1},
\end{equation*}
for some $\bbeta \in T_{\bgamma}\Met(\body)$, defining on the body the \emph{Rougée stress tensor} $\btheta$ \cite{Rou1980,Rou1991} as the density of the power $\mathcal{P}_{\bgamma}$. An elasticity law (in the Cauchy sense) is thus
\begin{equation}\label{eq:elastic-law-in-the-body}
  \btheta = \bgamma^{-1} F(\bgamma) \bgamma^{-1},
\end{equation}
where $F$ is a vector field on $\Met(\body)$. This formula is better understood using the following diagram
\begin{equation*}
  \xymatrix{
  T\Met(\body) \ar[r]^{G^{\mu}} \ar[d]^{\pi} & T^{\star}\Met(\body) \\
  \Met(\body) \ar@/^1pc/[u]^{F} \ar[ur]_{\quad \btheta = \bgamma^{-1} F(\bgamma) \bgamma^{-1}} &  }
\end{equation*}
The \emph{Noll intrinsic stress tensor}~\cite{Nol1972}, also defined on the body $\body$, is
\begin{equation*}
  \msigma := \rho_\bgamma\, \btheta, \qquad
  \rho_\bgamma = p^{*} \rho.
\end{equation*}
It is such that
\begin{equation*}
  \mathcal{P}_{\bgamma}(\bepsilon) = \int_{\body} (\msigma : \bepsilon) \vol_\bgamma,
  \qquad  \bepsilon\in T_{\bgamma}\Met(\body).
\end{equation*}
Both $\msigma$ and $\btheta$ are symmetric contravariant tensor fields on $\body$.

By \emph{pushforward} on $\Omega_{\pp}=\pp(\body)$, we recover then the Kirchhoff and Cauchy stress tensors
\begin{equation}\label{eq:elastic-law-in-space}
  \btau = \pp_{*}\btheta = \, \bq^{-1} \pp_{*}(F(\bgamma))\bq^{-1},
  \qquad
  \bsigma=\rho \btau= \pp_{*}\msigma.
\end{equation}

\begin{rem}
  Using the expression of the deformation rate~\eqref{eq:pbd}, the evaluation
  \begin{equation*}
    \mathcal{P}_{\bgamma}\Big(\frac{1}{2} \partial_{t} \bgamma\Big)=
    \int_{\pp(\body)} \pp_{*} \btheta \2dots \pp_{*}\Big(\frac{1}{2}  \partial_{t} \bgamma \Big)\, \pp_{*}\mu
    =\int_{\Omega_{\pp}} \rho \btau \2dots \bd\, \vol_{\bq}
    =\int_{\Omega_{\pp}} \bsigma \2dots \bd\, \vol_{\bq}
    =- \mathcal{P}^{int}
  \end{equation*}
  is recognized as the opposite of the power of internal forces for a Cauchy medium.
\end{rem}

In this geometric framework, the law~\eqref{eq:elastic-law-in-the-body} is said to be \emph{hyper-elastic} (or elastic in Green's sense) if $F$ is the \emph{gradient} (for the Rougée metric $G^{\mu}$) of a functional $\mathcal H$ defined on $\Met(\body)$. In that case
\begin{equation*}
  F = \grad^{G^{\mu}}\mathcal H,
\end{equation*}
where $\grad^{G^{\mu}}\mathcal H$ is defined implicitly by
\begin{equation*}
  d_{\bgamma}\mathcal H.\bepsilon = G^{\mu}_{\bgamma}(\grad^{G^{\mu}}\mathcal H, \bepsilon) = \int_{\body} \tr (\bgamma^{-1}(\grad^{G^{\mu}}\mathcal H) \bgamma^{-1} \bepsilon)\,\mu, \qquad \forall \bepsilon \in T_{\bgamma}\Met(\body),
\end{equation*}
meaning that
\begin{equation*}
  \mathcal{P}_{\bgamma}(\bepsilon) = \int_{\body} (\btheta : \bepsilon) \mu=d_{\bgamma}\mathcal H.\bepsilon.
\end{equation*}

\begin{rem}
  There is a strong similarity between this geometric framework and Einstein--Hilbert formulation of general relativity~\cite{Hil1924,Ein1988,Sou1964}. In the second case, the metric $\bgamma$ on the body is replaced by a Lorentzian metric $g$ on space-time $\mathcal{U}$, the Rougée stress tensor $\btheta$ becomes the \emph{stress–energy tensor} and the vector field $F(\bgamma)= \grad^{G^{\mu}} \mathcal{H}$ is replaced by the Einstein tensor
  \begin{equation*}
    S(g) = \grad^{G^{E}}\mathcal H = a \mathrm{Ric}(g) - \frac{1}{2}(aR(g) + b)g,
  \end{equation*}
  which is the gradient of the Einstein--Hilbert functional
  \begin{equation*}
    \mathcal H(g)=\int_{\mathcal U} (aR(g) +b) \vol_{g},
  \end{equation*}
  for the \emph{Ebin metric}~\cite{Ebi1968}
  \begin{equation*}
    G^{E}_{g}(\bepsilon^{1}, \bepsilon^{2}) := \int_{\mathcal{U}} \tr(g^{-1}\bepsilon^{1} g^{-1}\bepsilon^{2}) \, \vol_{g}.
  \end{equation*}
  The two constants $a$ and $b$ are related to the Newtonian constant of gravitation and to the Einstein cosmological constant \cite[p. 344, Eq. (35.25)]{Sou1964}.
\end{rem}

\subsection{Isotropic hyper-elasticity formulated on the body}
\label{subsec:isotropic-hyper-elasticity}

One may refer to~\cite{SM1984,SH1997} and~\cite[Chapter 7]{Ber2012} for a review of usual hyper-elasticity laws
and their specific free energies $\psi$. Here, we choose to formulate hyper-elasticity on the body $\body$ rather than on a reference configuration $\Omega_{0}$, using the metric $\bgamma=\pp^{*} \bq$ and the Rougée stress $\btheta=\pp^{*} \btau$. In that case, a reference metric $\bgamma_{0}= \pp_{0}^{*}\bq$ is required in order to define the mixed tensor~\eqref{eq:mixedCb} and its invariants. The formulation of isotropic hyper-elasticity (in Green sense) on the body $\body$, is provided by the following result.

\begin{thm}\label{thm:iso-hyper-body}
  Let $\pp_{0}$ be a reference configuration and $\bgamma_{0}=\pp_{0}^{*}\,\bq$. The isotropic local hyper-elasticity can be formulated on the body $\body$ by a functional
  \begin{equation*}
    \mathcal H_{\bgamma_{0}}(\bgamma) = \int_{\body} 2\psi (I_{1},  I_{2}, I_{3}) \mu
  \end{equation*}
  defined on $\Met(\body)$, where $I_{k}=\tr(\bgamma_{0}^{-1} \bgamma)^{k}$.
  We get
  \begin{equation}\label{eq:hyper-ela-theta}
    \btheta = \sum_{k=1}^{3} 2  \left(\frac{\partial \psi}{\partial I_{k}}(\bgamma)\right)\frac{\partial I_k}{\partial \bgamma}
    = \sum_{k=1}^{3} 2  k \left(\frac{\partial \psi}{\partial I_{k}}(\bgamma)\right)(\bgamma_{0}^{-1} \bgamma)^{k-1}\bgamma_{0}^{-1}.
  \end{equation}
\end{thm}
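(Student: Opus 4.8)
The plan is to identify $\btheta$ directly from the variational characterization recalled just above the statement, namely the identity
\[
  d_{\bgamma}\mathcal{H}_{\bgamma_{0}}.\bepsilon = \int_{\body} (\btheta : \bepsilon)\,\mu,
\]
which must hold for every virtual deformation $\bepsilon \in T_{\bgamma}\Met(\body) = \Gamma(S^{2}T^{\star}\body)$. Because the free energy density $2\psi(I_{1},I_{2},I_{3})$ is \emph{local} --- it depends on $\bgamma$ only pointwise, through the invariants $I_{k}$, while the mass measure $\mu$ is held fixed --- differentiation commutes with the integral and no boundary term is produced. First I would write, by the chain rule,
\[
  d_{\bgamma}\mathcal{H}_{\bgamma_{0}}.\bepsilon = \int_{\body} 2 \sum_{k=1}^{3} \frac{\partial \psi}{\partial I_{k}} \, (d I_{k}.\bepsilon)\,\mu .
\]

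The heart of the argument is the variation of each invariant. Setting $A := \bgamma_{0}^{-1}\bgamma$, so that $I_{k} = \tr(A^{k})$, and observing that $\bgamma_{0}$ is fixed while $\bgamma$ varies by $\bepsilon$, I would use $\delta A = \bgamma_{0}^{-1}\bepsilon$ together with the cyclic invariance of the trace to obtain $\delta \tr(A^{k}) = k\,\tr(A^{k-1}\delta A)$; although $A$ and $\delta A$ need not commute, each of the $k$ terms produced by the Leibniz rule reduces to $\tr(A^{k-1}\delta A)$ after a cyclic permutation. This gives
\[
  d I_{k}.\bepsilon = k\,\tr\!\big((\bgamma_{0}^{-1}\bgamma)^{k-1}\bgamma_{0}^{-1}\bepsilon\big),
  \qquad \text{equivalently} \qquad
  \frac{\partial I_{k}}{\partial \bgamma} = k\,(\bgamma_{0}^{-1}\bgamma)^{k-1}\bgamma_{0}^{-1},
\]
the latter being the contravariant tensor paired with $\bepsilon$ via $\frac{\partial I_{k}}{\partial \bgamma} : \bepsilon = \tr\!\big(\frac{\partial I_{k}}{\partial \bgamma}\,\bepsilon\big)$.

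Substituting into the chain-rule expression and comparing with $\int_{\body} (\btheta : \bepsilon)\,\mu = \int_{\body} \tr(\btheta\bepsilon)\,\mu$ reads off both claimed formulas for $\btheta$ simultaneously. The one point requiring care, which I expect to be the main (though mild) obstacle, is that this comparison pins down $\btheta$ uniquely only if $\btheta$ is itself symmetric, since $\bepsilon$ ranges over symmetric tensors and the pairing is non-degenerate precisely on the symmetric part. I would therefore verify that $(\bgamma_{0}^{-1}\bgamma)^{k-1}\bgamma_{0}^{-1}$ is a symmetric contravariant tensor: writing $B := \bgamma_{0}^{-1}$ and $C := \bgamma$, both symmetric, the transpose of $(BC)^{k-1}B$ is $B(CB)^{k-1}$, which equals $(BC)^{k-1}B$ by associativity. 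This confirms symmetry, so the identification is legitimate and the proof closes.
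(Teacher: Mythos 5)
Your proposal is correct and takes essentially the same route as the paper: chain rule under the integral, the cyclic-trace computation $\delta I_{k} = k\tr\bigl[(\bgamma_{0}^{-1}\bgamma)^{k-1}\bgamma_{0}^{-1}\,\delta\bgamma\bigr]$, and identification of $\btheta$ through the pairing $d_{\bgamma}\mathcal H_{\bgamma_{0}}.\bepsilon = \int_{\body}(\btheta:\bepsilon)\,\mu$. The only cosmetic difference is that the paper first inserts $\bgamma^{-1}\bgamma$ factors to exhibit the Rougée-metric gradient $\grad^{G^{\mu}}\mathcal H_{\bgamma_{0}} = \sum_{k} 2k\,(\partial\psi/\partial I_{k})\,\bgamma(\bgamma_{0}^{-1}\bgamma)^{k}$ before sandwiching with $\bgamma^{-1}$ on both sides, whereas you read $\btheta$ off directly; your explicit symmetry check of $(\bgamma_{0}^{-1}\bgamma)^{k-1}\bgamma_{0}^{-1}$, needed because $\bepsilon$ ranges only over symmetric tensors, is a sound precaution the paper leaves implicit.
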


\begin{rem}\label{rem:I3}
  In the argument of $\psi$, $I_{3}$ is often replaced by $\mathcal{J} = \sqrt{\det(\bgamma_{0}^{-1}\bgamma)}$
  (whose pushforward on $\Omega_{0}$ is  $\sqrt{\det\widehat \bC}=J_{\varphi}$), with then
  \begin{equation*}
    \frac{\partial \mathcal{J}}{\partial \bgamma} = \frac{1}{2} \mathcal{J}\, \bgamma^{-1}.
  \end{equation*}
\end{rem}

\begin{proof}
  We have
  \begin{equation*}
    \delta \mathcal H_{\bgamma_{0}} = \int_{\body} 2 \sum_{k=1}^{3} \frac{\partial \psi}{\partial I_{k}} \delta I_{k} \, \mu,
  \end{equation*}
  where
  \begin{equation*}
    \delta I_{k} = \delta (\tr(\bgamma_{0}^{-1} \bgamma)^{k}) = k \tr \left[(\bgamma_{0}^{-1} \bgamma)^{k-1}\bgamma_{0}^{-1}\delta \bgamma\right] = k \tr \left[\bgamma^{-1}(\bgamma(\bgamma_{0}^{-1} \bgamma)^{k})\bgamma^{-1}\delta \bgamma\right].
  \end{equation*}
  We get thus
  \begin{align*}
    \delta \mathcal H_{\bgamma_{0}} & = \int_{\body} \left(\sum_{k=1}^{3} 2k \frac{\partial \psi}{\partial I_{k}}(\bgamma) \tr \left[\bgamma^{-1}(\bgamma(\bgamma_{0}^{-1} \bgamma)^{k})\bgamma^{-1}\delta \bgamma\right] \right) \mu
    \\
                                    & = \int_{\body} \tr \left[\bgamma^{-1}\left(\sum_{k=1}^{3} 2 k \left(\frac{\partial \psi}{\partial I_{k}}(\bgamma)\right) \bgamma(\bgamma_{0}^{-1} \bgamma)^{k}\right)\bgamma^{-1}\delta \bgamma\right] \, \mu.
  \end{align*}
  Therefore
  \begin{equation*}
    \grad^{G^{\mu}}\mathcal H_{\bgamma_{0}} = \sum_{k=1}^{3} 2k \left(\frac{\partial \psi}{\partial I_{k}}(\bgamma)\right) \bgamma(\bgamma_{0}^{-1} \bgamma)^{k}
  \end{equation*}
  and
  \begin{equation*}
    \btheta = \bgamma^{-1} \left(\grad^{G^{\mu}}\mathcal H_{\bgamma_{0}}\right) \bgamma^{-1} = \sum_{k=1}^{3} 2k \left(\frac{\partial \psi}{\partial I_{k}}(\bgamma)\right)(\bgamma_{0}^{-1} \bgamma)^{k-1}\bgamma_{0}^{-1}
    = \sum_{k=1}^{3} 2 \frac{\partial \psi}{\partial I_k} \frac{\partial I_k}{\partial \bgamma},
  \end{equation*}
  which completes the proof.
\end{proof}

The isotropic invariants of the mixed tensor $\bgamma_{0}^{-1} \bgamma$ can be rewritten using either the left Cauchy--Green tensor $\bb = \pp_{*}\bgamma_{0}^{-1}$ (on $\Omega_{\pp}$) or the right Cauchy--Green tensor $\bC = {\pp_{0}}_{*} \bgamma$  (on $\Omega_{0}$), using the identities in remark~\ref{rem:mixedCb}. Indeed, as
pushforwards commute with all tensor contractions, we have
\begin{equation*}
  \pp_{*}I_{k} = I_{k} \circ \pp^{-1} = \tr (\bb\bq)^{k} = \tr \widehat \bb^{k},
\end{equation*}
and
\begin{equation*}
  {\pp_{0}}_{*}I_{k} = I_{k} \circ \pp_{0}^{-1}  = \tr (\bq^{-1}\bC)^{k} = \tr \widehat \bC^{k},
\end{equation*}
from which we deduce that
\begin{equation*}
  \frac{\partial \pp_{*}I_{k}}{\partial \bb} = k \bq (\bb\bq)^{k-1}, \qquad \frac{\partial {\pp_{0}}_{*}I_{k}}{\partial \bC} = k (\bq^{-1}\bC)^{k-1}\bq^{-1}.
\end{equation*}

Consider now the second Piola--Kirchhoff stress tensor
\begin{equation}\label{eq:PK2}
  \bS: = \varphi^{*} \btau = \pp_{0*} \btheta,
\end{equation}
defined as the pullback on $\Omega_{0}$, by the deformation $\varphi=\pp\circ \pp_{0}^{-1}$, of Kirchhoff stress tensor $\btau$. Then, we recover, on the one hand, the usual hyper-elasticity law
\begin{equation}\label{eq:HyperSC}
  \bS = 2\frac{\partial \psi_{0}}{\partial \bC},
  \qquad
  \psi_{0}(\widehat\bC) = \psi(I_{1} \circ \pp_{0}^{-1},I_{2} \circ \pp_{0}^{-1},I_{3} \circ \pp_{0}^{-1}),
\end{equation}
expressed on the reference configuration $\Omega_{0}$ by \emph{pushforward} of~\eqref{eq:hyper-ela-theta} as
\begin{equation*}
  \bS = {\pp_{0}}_{*} \btheta = 2k \left(\frac{\partial \psi}{\partial I_{k}}(\bgamma) \circ \pp_{0}\right)(\bq^{-1} \bC)^{k-1}\bq^{-1} = 2\frac{\partial \psi_{0}}{\partial \bC},
\end{equation*}
and, on the other hand, the usual isotropic hyper-elasticity law
\begin{equation}\label{eq:HyperK}
  \btau = 2\bq^{-1}\frac{\partial \psi_{\pp}}{\partial \bb}\bb,
  \qquad
  \psi_{\pp}(\widehat\bb)=\psi(I_{1} \circ \pp^{-1},I_{2} \circ \pp^{-1},I_{3} \circ \pp^{-1}),
\end{equation}
expressed on the deformed configuration $\Omega_{\pp}$ by \emph{pushforward} of~\eqref{eq:hyper-ela-theta} as
\begin{equation*}
  \btau = \pp_{*}\btheta =  2 k \left(\frac{\partial \psi}{\partial I_{k}}(\bgamma)\circ \pp\right) (\bb\bq)^{k-1}\bb
  = 2\bq^{-1}\frac{\partial \psi_{\pp}}{\partial \bb}\bb.
\end{equation*}

\begin{rem}
  Note that identifying the body with the reference configuration $\body \equiv \Omega_{0}$, and thus $\pp_{0}\equiv \id$, $\pp \equiv \varphi$, leads to the identification
  \begin{equation*}
    \btheta \equiv \bS,
  \end{equation*}
  of the Rougée tensor with the second Piola--Kirchhoff tensor.
\end{rem}

\subsection{Example: Harth--Smith hyper-elasticity}
\label{subsec:Harth-Smith}

As an example, let us consider \emph{quasi-incompressible Harth--Smith hyper-elasticity}~\cite{Hart66}, formulated on the reference configuration $\Omega_{0}$ by \eqref{eq:HyperSC} with the isotropic specific free energy
\begin{equation*}
  \psi_{0}= h_{1}\int_{3}^{\overline{I}_{1}(\widehat \bC)}{ \exp{(h_{3}(I-3)^2)}d I }+ 3h_{2}
  \ln\left(\frac{\overline{I}_{2}(\bC)}{3}\right) +U(J_\varphi),
  \qquad
\end{equation*}
where $h_{1}$, $h_{2}$ and $h_{3}$ are material parameters, $\widehat \bC=\bq^{-1} \bC$ and $J_\varphi=\sqrt{\det \widehat \bC}$. In order to properly model rubber quasi-incompressibility, the Penn invariants~\cite{Pen1970}  $\overline{I}_{1}$, $\overline{I}_{2}$ are used for the shear part of $\psi_{0}$, defined on $\Omega_{0}$ as
\begin{equation*}
  \overline{I}_{1}(\widehat \bC) :=(\det \widehat \bC)^{-\frac{1}{3}} \tr\widehat \bC,
  \qquad
  \overline{I}_{2}(\bC) :=(\det \widehat \bC)^{-\frac{2}{3}} \tr(\widehat \bC)^2,
\end{equation*}
together with the compressibility function
\begin{equation*}
  U(J)= \kappa (J\ln J-J+1),
\end{equation*}
defining $K=\rho_{0} \kappa$ as the compressibility, a material parameter~\cite{DS1999}. Since $\pp_{0}^{*}\, \widehat \bC=\bgamma_{0}^{-1}\bgamma$, we get
\begin{equation*}
  \pp_{0}^{*} \, J_\varphi = \mathcal{J},
  \qquad
  \pp_{0}^{*} \, \overline{I}_{1}(\widehat \bC) = \mathcal{J}^{-\frac{1}{3}} I_{1},
  \qquad
  \pp_{0}^{*} \,\overline{I}_{2}(\widehat\bC) = \mathcal{J}^{-\frac{2}{3}} I_{2}.
\end{equation*}
where $\mathcal{J} = (\det(\bgamma_{0}^{-1}\bgamma))^{1/2}$, $I_{1}=\tr(\bgamma_{0}^{-1}\bgamma)$ and $I_{2}=\tr(\bgamma_{0}^{-1}\bgamma\, \bgamma_{0}^{-1}\bgamma)$ are functions on $\body$. We deduce then, by theorem~\ref{thm:iso-hyper-body} and remark~\ref{rem:I3}, that Hart--Smith hyper-elasticity can be formulated on the body $\body$, as~\eqref{eq:hyper-ela-theta}, with
\begin{equation*}
  \psi=\pp_{0}^{*}\, \psi_{0} = h_{1}\int_{3}^{\mathcal{J}^{-\frac{1}{3}} I_{1}}{ \exp{(h_{3}(I-3)^2)}d I } + 3h_{2}
  \ln\left(\frac{\mathcal{J}^{-\frac{2}{3}} I_{2}}{3}\right) + U(\mathcal{J}).
\end{equation*}

\section{The virtual power as a one-form on the configuration space}
\label{sec:virtual-power}

In this section, we fix our notations in order to recast the virtual power of applied surface forces as a 1-form on the configuration space and formulate an elasticity problem in a geometric manner, directly on the body $\body$. We start by recalling that the principle of virtual power in quasi-statics states that
\begin{equation*}
  \mathcal{P}^{int}(\ww) + \mathcal{P}^{ext}(\ww) = 0,
\end{equation*}
for any kinematically admissible virtual displacement field $\ww$ on $\Omega_{\pp}$, sufficiently regular and vanishing on the prescribed displacement (Dirichlet) boundary $\partial \Omega_{\pp}^{\bxi} \subset \partial \Omega_{\pp}$. In the sequel, we assume that $\partial \Omega_{\pp}^{\bxi} \neq \emptyset$, in order ton avoid some indeterminacy.

The virtual power of the internal forces $\mathcal{P}^{int}$ is
\begin{equation*}
  - \int_{\Omega_{\pp}} (\bsigma : \bepsilon) \, \vol_{\bq},
\end{equation*}
where $\bepsilon$ is a virtual strain field (a second order symmetric covariant tensor field) on $\Omega_{\pp}$. It is further assumed that the virtual deformation field $\bepsilon$ derives from a virtual displacement field $\ww$ on $\Omega_{\pp}$, which means that
\begin{equation*}
  \bepsilon := \frac{1}{2}\Lie_{\ww}\bq = D\ww^{\flat},
  \qquad
  \ww^{\flat}=\bq \ww.
\end{equation*}
Here, the operator
\begin{equation}\label{eq:DZ}
  D\ww^{\flat} (X,Y) = \frac{1}{2} \left( (\nabla_X \ww^{\flat})(Y) + (\nabla_Y \ww^{\flat})(X)\right)
\end{equation}
is the \emph{formal adjoint} of the divergence operator. One, then, has
\begin{equation*}
  \mathcal{P}^{int}(\ww) = - \int_{\Omega_{\pp}} (\bsigma : D\ww^{\flat}) \, \vol_{\bq} = -\int_{\partial \Omega_{\pp}} (\widehat{\bsigma}\nn \cdot \ww) da + \int_{\Omega_{\pp}} (\dive \bsigma \cdot \ww)\,\vol_{\bq},
\end{equation*}
where $\nn$ is the unit outer normal on the boundary $\partial \Omega_{\pp}$, $da=i_{\vec n}\vol_{\bq}$ is the area element, and $\widehat{\bsigma}:=\bsigma \bq$ is the mixed form of Cauchy stress tensor.

The virtual power of external forces has for general expression
\begin{equation}\label{eq:ext-power}
  \mathcal{P}^{ext} = \mathcal{P}^{ext,v} + \mathcal{P}^{ext,s},
\end{equation}
where
\begin{equation*}
  \mathcal{P}^{ext,v}(\ww) = \int_{\Omega_{\pp}} \left(\ff_{v}\cdot \ww \right) \vol_{\bq},
\end{equation*}
and $\ff_{v}$ is the field of external forces per unit of volume. The second term $\mathcal{P}^{ext,s}$ corresponds to surface forces (boundary conditions of Neumann type). In the sequel, we will assume that $\mathcal{P}^{ext,s}$ consists in a so-called \emph{dead load} term (\emph{DL}) and/or a prescribed pressure term (\emph{P}). It has thus the following expression~\cite{Cia1988,PC1991}
\begin{equation}\label{eq:ext-power-surface}
  \mathcal{P}^{ext,s}(\ww) = \int_{\Sigma_{0}^{(DL)}} \left( \vec t_{0} \cdot \delta \varphi  \right) da_{0}- \int_{\Sigma^{(P)}} P \left(\nn \cdot \ww \right) da,
  \qquad
  (\delta \varphi =\ww \circ \varphi)
\end{equation}
where
\begin{itemize}
  \item the vector valued function $\vec t_{0}$, with values in $\espace$, is defined on the surface $\Sigma_{0}^{(DL)} \subset \partial \Omega_{0}$,
  \item the scalar function $P$ (the pressure) is defined on the surface $\Sigma^{(P)} \subset \partial \Omega_{\pp}$,
\end{itemize}
with the property that
\begin{equation*}
  \mathrm{int}\left(\Sigma^{(DL)}\right) \cap \mathrm{int}\left(\Sigma^{(P)}\right)= \emptyset, \quad   \text{and} \quad
  \Sigma^{(DL)} \cup \Sigma^{(P)} = \partial \Omega_{\pp} \setminus \partial \Omega_{\pp}^{\bxi},
\end{equation*}
where $\mathrm{int}(\Sigma)$ is the interior of the two-dimensional manifold (with boundary) $\Sigma$.

The virtual power of the pressure can be written as
\begin{equation*}
  \int_{\Sigma^{(P)}} P\, (\ww \cdot \nn)\, da = \int_{\Sigma^{(P)}} P\, i_{\ww}  \vol_{\bq},
\end{equation*}
using remark~\ref{rem:nds}. Now, by the change of variables formula, we get
\begin{equation}\label{eq:Pwn}
  \begin{aligned}
    \int_{\Sigma^{(P)}} P\, i_{\ww}  \vol_{\bq} = \int_{\pp^{-1}(\Sigma^{(P)})} \pp^{*}\left(P\, i_{\ww}  \vol_{\bq}\right)
     & = \int_{\pp^{-1}(\Sigma^{(P)})} \pp^{*}(P)\,  i_{\pp^{*}\ww} \, \pp^{*} \vol_{\bq}
    \\
     & = \int_{\Sigma_{\body}^{(P)}}(P\circ \pp)\,  i_{\bF^{-1}\delta \pp}\vol_{\bq}(  \bF \cdot ,  \bF \cdot, \bF \cdot)
    \\
     & = \int_{\Sigma_{\body}^{(P)}}(P\circ \pp)\,   \vol_{\bq} ( \delta \pp , \bF \cdot,\bF\cdot),
  \end{aligned}
\end{equation}
where $\bF=T\pp$, $\delta \pp=\ww \circ \pp$ and $\Sigma_{\body}^{(P)} := \pp^{-1}(\Sigma^{(P)})\subset \partial \body$.
Here, and in the following, given a $k$-form $\omega$, the notation $\omega(\bF \cdot ,  \dotsc, \bF \cdot)$
has to be understood as the $k$-form defined by
\begin{equation*}
  \omega(  \bF \cdot ,  \dotsc, \bF \cdot)(A_1,\dots ,A_k)=\omega(  \bF A_1 ,  \dotsc, \bF A_k).
\end{equation*}

\begin{rem}\label{rem:WP-on-the-body}
  If we identify the body $\body$ with a reference configuration $\Omega_{0}$, we get $\pp_{0}\equiv \id$, $\pp\equiv \varphi$,
  $\Sigma_{\body}^{(P)} \equiv\Sigma_{0}^{(P)}\subset \partial \Omega_{0}$, $\bF\equiv\bF_{\varphi}$ and $\delta \bxi \equiv \delta\varphi$, and
  \begin{align*}
    \int_{\Sigma_{\body}^{(P)}} (P\circ \pp)\, \vol_{\bq}(\delta \pp , \bF \cdot,\bF\cdot)
     & = \int_{\Sigma_{0}^{(P)}} (P\circ \varphi)\, \vol_{\bq}(\delta \bxi , \bF_\varphi \cdot,\bF_\varphi\cdot)
    \\
     & =
    \int_{\Sigma_{0}^{(P)}}  (P\circ \varphi)\, (\det \bF_{\varphi}) i_{{\bF_{\varphi}}^{-1}\delta\bxi} \vol_{\bq}
    \\
     & = \int_{\Sigma_{0}^{(P)}} (P\circ \varphi)\, J_{\varphi} {\bF_{\varphi}}^{-1}\delta\bxi \cdot \nn_{0} \, da_{0}.
  \end{align*}
  We recover the well-known expression of the pressure work
  \begin{equation*}
    \int_{\Sigma^{(P)}} P\, (\ww \cdot \nn)\, da = \int_{\Sigma_{0}^{(P)}} (P\circ \varphi)\, J_\varphi \bF_\varphi^{-1} \delta\bxi \cdot \nn_{0}\, da_{0},
    \qquad
    J_\varphi=\det \bF_\varphi,
  \end{equation*}
  formulated on the reference configuration $\Omega_{0}$ (see~\cite{Pea1956,Sew1965,Bal1976/77,Cia1988,MH1994}). This expression is usually obtained, using the Nanson formula~\eqref{eq:Nanson-formula}. One might note that
  \begin{equation*}
    J_{\varphi} {\bF_{\varphi}}^{-1}\delta\bxi \cdot \nn_{0} \, da_{0} = \vol_{\bq}(\delta \bxi , \bF_\varphi \cdot,\bF_\varphi\cdot).
  \end{equation*}
\end{rem}

Finally, introducing
\begin{equation*}
  \mathcal{W}_{\pp}^{ext,s}(\delta \pp) := -\mathcal{P}^{ext,s}(\delta \pp\circ\pp^{-1}),
\end{equation*}
we recast~\eqref{eq:ext-power-surface} on the body as
\begin{equation}\label{eq:ext-W-surface}
  \mathcal{W}_{\pp}^{ext,s}(\delta \pp) = -\int_{\Sigma_\body^{(DL)}} \left( \vec t_{0} \circ \pp_{0} \cdot \delta \pp \right) da_{\bgamma_{0}} + \int_{\Sigma_{\body}^{(P)}} (P\circ \pp)\,\vol_{\bq} ( \delta \pp , \bF \cdot,\bF\cdot),
\end{equation}
where we have set $\Sigma_{\body}^{(DL)}=\pp_{0}^{-1}(\Sigma_{0}^{(DL)})\subset \partial \body$ and $da_{\bgamma_{0}}=\pp_{0}^{*} \, da_{0}$. Thus, $\mathcal{W}^{ext,s}$ is naturally interpreted as a \emph{1-form} (see~\autoref{sec:differential-forms}) on the configuration space $\Emb(\body,\espace)$.

\begin{rem}[Equilibrium equations]
  The adequate choice of virtual displacement fields vanishing on the boundary $\partial \Omega_{\pp}$ classically leads to the equilibrium equations, but expressed here, either on the deformed configuration $\Omega_{\pp}$, as
  \begin{equation}\label{eq:equilOmega}
    \dive \bsigma + \ff_{v}=0,
  \end{equation}
  or, on the body $\body$, as
  \begin{equation}\label{eq:equilBody}
    \dive^{\bgamma} \msigma +\vec{\mathfrak{f}}_{v}=0,
    \qquad
    \vec{\mathfrak{f}}_{v} = \pp^{*}\ff_{v}=\bF^{-1} \ff_{v} \circ \pp,
  \end{equation}
  where $\vec{\mathfrak{f}}_{v}$ stands for the external forces density on $\body$ and where $\dive^{\bgamma} \msigma= \tr_{13} (\nabla^{\bgamma} \msigma) = \pp^* \dive \bsigma$, $\nabla^{\bgamma}$ being the Riemannian covariant derivative corresponding to the metric $\bgamma=\pp^* \bq$.
\end{rem}

\section{A general scheme for Lagrangian formulations of hyper-elasticity problems}
\label{sec:Lagrangian-formulations}

We now introduce a systematic method to build a Lagrangian (a potential energy) for an hyper-elasticity problem, when certain compatibility conditions are satisfied, concerning the surface forces. When these conditions are not satisfied, we formulate, anyway, \emph{non-holonomic} constraints, \textit{i.e.}, which involve not only the embedding $\pp$, but also its variation $\delta \pp$, to try to bypass these restrictions. This method is illustrated by applying them to two types of Neumann conditions: \emph{dead load (DL)} in section~\ref{subsec:dead-loads} and \emph{prescribed pressure (P)} in~\autoref{sec:prescribed-pressure}. These examples are classical, but the way we recast them seems to be original and illustrates the power of differential geometry (in infinite dimension) to tackle the problem in a more conceptual and systematic way (which avoid numerous typos found in the literature). Moreover,
the proposed method allows us to improve the non-holonomic constraints formulated by Beatty~\cite{Bea1970} to ensure the existence of a pressure potential. It is rather general and could, \textit{a priori}, be applied to many other situations. In a connected but different direction, the interested reader may refer to~\cite{PV1989,Via2013,Zdu2019} for discussions on the geometrization of point-wise holonomic constraints in finite strain theory.

The principle of virtual work, once recast on the body, involves various differential one-forms $\mathcal{W}^{k}$ defined on $\Emb(\body,\espace)$, like $\mathcal{W}^{ext,s}$ in~\eqref{eq:ext-W-surface} \cite{ES1980,Seg1986}. This principle stipulates that an embedding $\pp$ (which may be subject to some holonomic constraints) is solution of the mechanical problem if and only if the sum of these one-forms vanishes for all variations $\delta\pp$ (which may be subject to some non-holonomic constraints). The goal here is to try to recast this problem as a Lagrangian variational problem. In the best case, when each involved one-form $\mathcal{W}^{k}$ is exact, meaning that $\mathcal{W}^{k} = d \mathcal{L}^{k}$, then, each solution of the mechanical problem is an extremal of the Lagrangian
\begin{equation*}
  \mathcal{L} := \sum_{k} \mathcal{L}^{k}.
\end{equation*}

The good news is that classical differential geometry furnishes tools, like the Poincaré lemma (see~\autoref{sec:differential-forms}) to decide if the problem admits a Lagrangian, and in that case, to calculate a Lagrangian. Otherwise, it allows to formulate explicitly non-holonomic constraints under which a Lagrangian may still be defined. These tools can be extrapolated to differential geometry in infinite dimension and we will now describe them, in the mechanical situations we are interested in. The approach used here is the same as the one adopted by Arnold in~\cite{Arn1965}: use classical results from finite dimensional geometry, extrapolate them in this extended infinite dimensional setting, and then check that they are still true. This check being verified, in our case, by an explicit variation calculus.

The tool required to achieve our goal is the theory of \emph{differential forms} which is briefly recalled in~\autoref{sec:differential-forms} (see also~\cite{Car1970,AMR1988,Lan1999}). For a rigorous extension of this formalism to infinite dimensional spaces, one may consider useful to look at~\cite{Ham1982,Olv1993,KM1997}.
For our concern, one needs only to know that $\Emb(\body,\espace)$ can be considered as an open set of the vector space $\Cinf(\body,\espace)$, as soon as an origin has been chosen in $\espace$. Thus, a one-form on $\Emb(\body,\espace)$ is a continuous linear functional on $\Cinf(\body,\espace)$, depending smoothly on $\pp$, and a two-form on $\Emb(\body,\espace)$ is a skew-symmetric continuous bilinear functional on $\Cinf(\body, \espace)$, depending smoothly on $\pp$. Extrapolating the theory of differential forms to infinite dimension (see~\cite{Ham1982,KM1997}), we define:
\begin{itemize}
  \item the \emph{exterior derivative} of a $0$-form on $\Emb(\body,\espace)$ (\textit{i.e.}, a functional $\mathcal{L}$), just as the first variation of $\mathcal{L}$
        \begin{equation*}
          (d\mathcal{L})_{\pp}(\delta\pp) = \delta \mathcal{L} := \frac{d}{ds} \Big|_{s=0} \mathcal{L}(\pp(s)),
        \end{equation*}
        where $\pp(s)$ is a path of embeddings with $\pp(0)=\pp$ and $\dot \pp(0)=\delta \pp$;

  \item the \emph{exterior derivative} of a $1$-form $\mathcal{W}$ on $\Emb(\body,\espace)$ as
        \begin{equation*}
          (d\mathcal{W})_{\pp}(\delta_{1}\pp,\delta_{2}\pp) := \delta_{1}(\mathcal{W}_{p}(\delta_{2}\pp)) - \delta_{2}(\mathcal{W}_{p}(\delta_{1}\pp)),
        \end{equation*}
        where
        \begin{equation*}
          \delta_{2}(\mathcal{W}_{p}(\delta_{1}\pp)) := \frac{d}{ds} \Big|_{s=0} \mathcal{W}_{p(s)}(\delta_{1}\pp),
        \end{equation*}
        and $\pp(s)$ is a path of embeddings with $\pp(0)=\pp$ and $\dot \pp(0)=\delta_{2} \pp$.
\end{itemize}

\begin{rem}\label{rem-vec-val-p}
  To write these two formulas, we have used the hypothesis that $\espace$ is an affine space, and thus that $T\Emb(\body,\espace)=\Emb(\body,\espace) \times \Cinf(\body,E)$ is trivial.
\end{rem}

\subsection{The Poincaré integrator}
\label{subsec:Poincare-integrator}

Our methodology is the following. Starting from a single $1$-form $\mathcal{W}$ in the list $(\mathcal{W}^{k})$, we check first if it is is closed. In other words, we check if
\begin{equation*}
  (d\mathcal{W})_{\pp}(\delta_{1}\pp,\delta_{2}\pp) = \delta_{1}(\mathcal{W}_{p}(\delta_{2}\pp)) - \delta_{2}(\mathcal{W}_{p}(\delta_{1}\pp)) = 0,
\end{equation*}
with then two cases.
\begin{itemize}
  \item \textbf{The case $\mathcal{W}$ is closed} ($d\mathcal{W}=0$). Then, a Lagrangian $\mathcal{L}$ is obtained locally, using the Poincaré lemma~\ref{lem:Poincare}, by the following procedure. Let $\pp_{0} \in \Emb(\body,\espace)$ be a reference configuration. The displacement
        \begin{equation*}
          \bxi(\pp) = \pp - \pp_{0},
        \end{equation*}
        corresponds to the radial vector field issued from $\pp_{0}$ on $\Emb(\body,\espace)$. Consider now the neighborhood of $\pp_{0}$, defined by
        \begin{equation}\label{eq:Up0}
          \mathcal{U}_{\pp_{0}} := \set{\pp \in \Cinf(\body,\espace); \; \norm{\bF - \bF_{0}}_{0} < 1},
        \end{equation}
        where
        \begin{equation*}
          \norm{\bF - \bF_{0}}_{0} := \sup_{\XX\in \body} \norm{\bF(\XX) - \bF_{0}(\XX)}.
        \end{equation*}
        The set $\mathcal{U}_{\pp_{0}}$ is an \emph{open convex neighborhood} of $\pp_{0}$ in $\Cinf(\body,\espace)$, which is contained in $\Emb(\body,\espace)$ (see~\autoref{sec:Frechet-topology}). Then, the Poincaré formula~\eqref{eq:Poincare-formula} provides us with the primitive
        \begin{equation*}
          \mathcal{L}(\pp) = \int_{-\infty}^{0} \left[(\phi^{t})^{*} i_{\bxi}\mathcal{W}\right](\pp) \, dt ,
        \end{equation*}
        defined on $\mathcal{U}_{\pp_{0}}$, and where $\phi^{t}(\pp) = e^{t} \pp+(1-e^{t}) \pp_{0}$ is the flow of the radial field $\bxi(\pp)=\pp-\pp_{0}$. But
        \begin{equation*}
          \left[(\phi^{t})^{*} i_{\bxi}\mathcal{W}\right](\pp) = \mathcal{W}_{\phi^{t}(\pp)}(\bxi(\phi^{t}(\pp))) = e^{t}\mathcal{W}_{\phi^{t}(\pp)}(\bxi(\pp)),
        \end{equation*}
        because $\bxi(\phi^{t}(\pp)) = e^{t} \bxi(\pp)$ and we get finally the sought Lagrangian for $\mathcal{W}$
        \begin{equation}\label{eq:Poincare-primitive}
          \mathcal{L}(\pp) = \int_{-\infty}^{0} e^{t}\mathcal{W}_{\phi^{t}(\pp)}(\bxi(\pp)) \, dt.
        \end{equation}

  \item \textbf{The case $\mathcal{W}$ is not closed} ($d\mathcal{W}\neq 0$). Then, by~\eqref{eq:Poincare-formula-full}, we get
        \begin{equation*}
          d\mathcal{L} = \mathcal{W} - \int_{-\infty}^{0} (\phi^{t})^{*} i_{\bxi}d\mathcal{W} \, dt,
        \end{equation*}
        where
        \begin{align*}
          \left[(\phi^{t})^{*} i_{\bxi}d\mathcal{W}\right]_{\pp}(\delta\pp) & = (d\mathcal{W})_{\phi^{t}(\pp)}(T\phi^{t}.\, \bxi(\pp), T\phi^{t}.\delta\pp)
          \\
                                                                            & = e^{2t}(d\mathcal{W})_{\phi^{t}(\pp)}(\bxi(\pp), \delta\pp),
        \end{align*}
        because the linear tangent mapping $T_{\pp}\phi^{t}: T_{\pp}\Emb(\body,\espace) \to T_{\phi^{t}(\pp)}\Emb(\body,\espace)$ can be written as
        \begin{equation*}
          T_{\pp}\phi^{t}.\delta\pp = e^{t} \delta\pp.
        \end{equation*}
        In this case, the integral
        \begin{equation}\label{eq:Poincare-obstruction}
          I = \int_{-\infty}^{0} e^{2t}(d\mathcal{W})_{\phi^{t}(\pp)}(\bxi(\pp), \delta\pp) \, dt,
        \end{equation}
        when non vanishing, is the obstruction for~\eqref{eq:Poincare-primitive} to be a primitive of $\mathcal{W}$ and $I = 0$ can be interpreted as a \emph{non-holonomic constraint}, required for~\eqref{eq:Poincare-primitive} to become a Lagrangian for $\mathcal{W}$.
\end{itemize}

\subsection{Example: Dead loads}
\label{subsec:dead-loads}

Before addressing the more involving case of a prescribed pressure, let us illustrate our methodology on the well-known case of a dead load.

\begin{rem}
  The related questions of how to recast locally such a boundary condition on the body $\body$, in a so-called intrinsic manner~\cite{Nol1972,Rou1980,GLM2014,Fia2016}, and of how to redefine a first Piola--Kirchhoff tensor on $\body$ are answered in \autoref{sec:PK1}.
\end{rem}

In the dead load case (see~\eqref{eq:ext-W-surface}), the corresponding one-form defined on $\Emb(\body,\espace)$ can be written as
\begin{equation*}
  \mathcal{W}^{DL}_{\pp}(\delta\pp) = - \int_{\Sigma_{\body}^{(DL)}} (\delta\pp \cdot \vec t_{0}\circ p_{0} )\,da_{\bgamma_{0}}.
\end{equation*}
This form is obviously closed, since it does not depend explicitly on $\pp$. Thus, a Lagrangian (\textit{i.e.}, a potential for the applied surface force~\cite{Cia1988}) $\mathcal{L}^{DL}(\pp)$ exists for dead loads and is given by~\eqref{eq:Poincare-primitive}. After integration, it can be written as
\begin{equation*}
  \mathcal{L}^{DL}(\pp) = - \int_{\Sigma_{\body}^{(DL)}} ((\pp-\pp_{0})\cdot \vec t_{0}\circ p_{0} )\,da_{\bgamma_{0}},
\end{equation*}
since
\begin{equation*}
  \mathcal{W}^{DL}_{\phi^{t}(\pp)}(\bxi(\pp)) = - \int_{\Sigma_{\body}^{(DL)}} ((\pp - \pp_{0}) \cdot \vec t_{0}\circ p_{0}) \,da_{\bgamma_{0}}.
\end{equation*}

\section{Prescribed pressure}
\label{sec:prescribed-pressure}

The question of the existence of a Lagrangian (a potential energy) for hyper-elasticity problems with boundary conditions of the prescribed pressure type
\begin{equation}\label{eq:CLP}
  \left.\widehat{\bsigma}\nn\right|_{\Sigma} = - P \nn,
\end{equation}
on some part $\Sigma=\Sigma^{(P)}$ of the boundary $\partial\Omega_{\pp}$ has been addressed in several works~\cite{Pea1956,Sew1965,Sew1967,Bea1970,Bal1976/77} but rarely (never ?) directly on the body's boundary $\partial \body$. The prescribed pressure boundary condition~\eqref{eq:CLP} becomes, by pullback on $\partial \body$,
\begin{equation}\label{eq:CLP-body}
  \left.\widehat{\msigma}\bN \right|_{\Sigma_\body} = - (P\circ \pp) \bN,
  \qquad
  \Sigma_\body = \pp^{-1}(\Sigma),
\end{equation}
where $\widehat{\msigma} = \pp^{*}\widehat{\bsigma}$ is the Noll stress tensor in its mixed form, and where $\pp^{*} \nn=\bN$ is the unit normal on $\partial \body$ for the metric $\bgamma=\pp^{*}\bq$. In this section, we assume that $P$ is uniform, \textit{i.e.}, $P = P_{k}$, on each component $\Sigma^{k}$ of the boundary $\Sigma^{(P)}\subset \partial \Omega_{\pp}$, where the pressure is applied. The contribution to the virtual power of exterior forces $\mathcal{P}^{ext}$ of these prescribed pressures can thus be written as
\begin{equation}\label{eq:CondLP}
  -\sum_{k} P_{k} \int_{\Sigma^{k}} (\ww \cdot \nn)\, da,
\end{equation}
where $\ww$ is the virtual Eulerian velocity.

\subsection{The pressure virtual power is not closed}
\label{subsec:pressure-not-closed}

The integral
\begin{equation*}
  \int_{\Sigma} P(\ww \cdot \nn) da
\end{equation*}
was recast on the body boundary in~\eqref{eq:Pwn}. Here, for simplicity, we will assume that $P=1$ is a constant function and we have thus
\begin{equation}\label{eq:pressure-power}
  \mathcal{W}^{P}_{\pp}(\delta \pp):= \int_{\Sigma} (\ww \cdot \nn)\, da = \int_{\Sigma_{\body}} \vol_{\bq} ( \delta \pp , \bF \cdot,\bF\cdot)
\end{equation}

The pressure case illustrates the second case, $\mathcal{W}$ is not closed, of our general scheme in~\autoref{subsec:Poincare-integrator}.

\begin{lem}\label{lem:W-not-closed}
  The differential form
  \begin{equation*}
    \mathcal{W}^{P}_{\pp}(\delta \pp) = \int_{\Sigma_{\body}} \vol_{\bq} ( \delta \pp , \bF \cdot,\bF\cdot)
  \end{equation*}
  defined on $\Emb(\body,\espace)$ is not closed. Its exterior derivative has for expression
  \begin{equation}\label{eq:dWbody}
    (d\mathcal{W}^{P})_{\pp}(\delta \pp_{1}, \delta \pp_{2}) = \int_{\partial \Sigma_{\body}} \left( \delta \pp_{2} \times \delta \pp_{1} \right) \cdot \, \bF\, d\vec \ell_{\body},
  \end{equation}
  where $d\vec \ell_{\body}=\pp_{0}^{*}\,d\vec \ell_{0}$ is the oriented length element on $\partial \Sigma_{\body}$.
\end{lem}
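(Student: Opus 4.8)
The plan is to compute $d\mathcal{W}^{P}$ directly from the definition of the exterior derivative of a $1$-form and then to recognise the resulting integrand over $\Sigma_{\body}$ as an \emph{exact} $2$-form, so that Stokes' theorem collapses it onto $\partial\Sigma_{\body}$. After choosing an origin in $\espace$, I view $\pp$ and the tangent vectors $\delta\pp_{1},\delta\pp_{2}\in\Cinf(\body,E)$ as $E$-valued functions on $\body$, so that $\bF=d\pp$ and the variation of $\bF$ along the direction $\delta\pp_{1}$ is $\delta\bF=d(\delta\pp_{1})$ (the $s$- and $\XX$-derivatives commute). Since $T\espace=\espace\times E$ is trivial, $\vol_{\bq}$ is a \emph{constant} alternating $3$-form on $E$, and since $\delta\pp_{2}$ is held fixed in the trivialisation of $T\Emb(\body,\espace)$ (Remark~\ref{rem-vec-val-p}), differentiating under the integral affects only the two $\bF$-slots. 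Multilinearity then gives $\delta_{1}(\mathcal{W}^{P}_{\pp}(\delta\pp_{2}))=\int_{\Sigma_{\body}}\alpha_{12}$ (here $\delta_{1}$ denotes the derivative along a path with velocity $\delta\pp_{1}$, $\delta\pp_{2}$ held fixed), where $\alpha_{12}$ is the $2$-form $(A_{1},A_{2})\mapsto\vol_{\bq}(\delta\pp_{2},d(\delta\pp_{1})A_{1},\bF A_{2})+\vol_{\bq}(\delta\pp_{2},\bF A_{1},d(\delta\pp_{1})A_{2})$, which is genuinely alternating by skew-symmetry of $\vol_{\bq}$. Writing $\alpha_{21}$ for the same expression with the indices $1\leftrightarrow2$ interchanged, the definition of $d$ yields $(d\mathcal{W}^{P})_{\pp}(\delta\pp_{1},\delta\pp_{2})=\int_{\Sigma_{\body}}(\alpha_{12}-\alpha_{21})$.

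The key step is to exhibit $\alpha_{12}-\alpha_{21}$ as exact. I introduce the scalar $1$-form $\lambda(A):=\vol_{\bq}(\delta\pp_{2},\delta\pp_{1},\bF A)$ on $\body$. Reading $\delta\pp_{1},\delta\pp_{2}$ as $E$-valued $0$-forms and $d\pp$ as the $E$-valued $1$-form $A\mapsto\bF A$, the constancy of $\vol_{\bq}$ makes the contraction $\vol_{\bq}(\cdot,\cdot,\cdot)$ a graded derivation for $d$; since $d(d\pp)=0$, this gives
\[
 d\lambda=\vol_{\bq}\big(d(\delta\pp_{2}),\delta\pp_{1},d\pp\big)+\vol_{\bq}\big(\delta\pp_{2},d(\delta\pp_{1}),d\pp\big).
\]
A short rearrangement using only the skew-symmetry of $\vol_{\bq}$ identifies these two terms with $-\alpha_{21}$ and $+\alpha_{12}$, hence $d\lambda=\alpha_{12}-\alpha_{21}$.

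Since $\Sigma_{\body}\subset\partial\body$ is a surface with boundary, Stokes' theorem then gives
\[
 (d\mathcal{W}^{P})_{\pp}(\delta\pp_{1},\delta\pp_{2})=\int_{\Sigma_{\body}}d\lambda=\int_{\partial\Sigma_{\body}}\lambda.
\]
Evaluating $\lambda$ on the oriented unit tangent of $\partial\Sigma_{\body}$ and using the triple-product identity $\vol_{\bq}(a,b,c)=(a\times b)\cdot c$ turns the integrand into $(\delta\pp_{2}\times\delta\pp_{1})\cdot\bF\,d\vec\ell_{\body}$, which is exactly~\eqref{eq:dWbody}. Non-closedness follows at once, since this boundary integral is generically nonzero.

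I expect the main obstacle to be the sign bookkeeping in the key step: applying the graded-Leibniz rule to $\vol_{\bq}(\delta\pp_{2},\delta\pp_{1},d\pp)$, checking that the $d(d\pp)=0$ term really drops, and verifying that the surviving two terms reassemble to precisely $\alpha_{12}-\alpha_{21}$ and not its negative. A secondary point is matching the boundary orientation induced by Stokes on $\partial\Sigma_{\body}$ with the oriented length element $d\vec\ell_{\body}=\pp_{0}^{*}\,d\vec\ell_{0}$. As an independent check, one can verify both $(d\mathcal{W}^{P})_{\pp}=\int_{\Sigma_{\body}}(\alpha_{12}-\alpha_{21})$ and the identity $\alpha_{12}-\alpha_{21}=d\lambda$ by an explicit coordinate variation, in the spirit advocated for transferring the Poincaré machinery to infinite dimension.
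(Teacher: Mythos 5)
Your proposal is correct and follows essentially the same route as the paper: you differentiate under the integral using the vector-valued-function viewpoint, introduce exactly the paper's auxiliary $1$-form ($\lambda$ is the paper's $\alpha(A)=\vol_{\bq}(\delta_{2}\pp,\delta_{1}\pp,\Lie_{A}\pp)$), identify its exterior derivative with the integrand, and conclude by Stokes; the signs you worried about do work out, since each term of $d\lambda$ matches a term of $\alpha_{12}-\alpha_{21}$ after one or two transpositions in the alternating form. The only (cosmetic) difference is that you obtain $d\lambda$ from the graded Leibniz rule for the constant form $\vol_{\bq}$ together with $d(d\pp)=0$, where the paper invokes its Lemma~\ref{lem:alpha-exterior-derivative}, itself proved by the equivalent Cartan-formula computation.
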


The proof we propose here is based on the observation that an embedding $\pp : \body \to \espace$ and a variation of this embedding $\delta \pp: \body \to T\espace=\espace \times E$ can be considered as \emph{vector valued functions} $\ff: \body \to E$ (see \autoref{sec:kinematics}). Now, if $A$ is a vector field on $\body$, we can define the Lie derivative $\Lie_{A} \ff$ of $\ff$ relative to $A$. But, since $\ff$ is considered as a function, it depends only on the pointwise value of $A$ (and not of its first derivatives) and we get moreover
\begin{equation*}
  T\ff.A = \Lie_{A}\ff.
\end{equation*}

\begin{proof}[Proof of lemma~\ref{lem:W-not-closed}]
  We have
  \begin{equation*}
    \vol_{\bq}(\delta_{1}\pp, \bF A , \bF B) = \vol_{\bq}(\delta_{1}\pp, \Lie_{A}\pp , \Lie_{B}\pp),
  \end{equation*}
  $A$ and $B$ being vector fields on $\partial\body$ (which do not depend on $\pp$). Let $\delta_{2}\pp$ be a second variation of $\pp$, we get thus
  \begin{equation*}
    \delta_{2} (\vol_{\bq}(\delta_{1}\pp, \Lie_{A}\pp , \Lie_{B}\pp)) = \vol_{\bq}(\delta_{1}\pp, \Lie_{A}\delta_{2}\pp , \Lie_{B}\pp) + \vol_{\bq}(\delta_{1}\pp, \Lie_{A}\pp , \Lie_{B}\delta_{2}\pp),
  \end{equation*}
  and hence
  \begin{align*}
    (d\mathcal{W}^{P})_{\pp}(\delta \pp_{1}, \delta \pp_{2}) & = \int_{\Sigma_{\body}} \delta_{1}\vol_{\bq}(\delta_{2} \pp, \bF \cdot , \bF \cdot) -\delta_{2}\vol_{\bq}(\delta_{1} \pp, \bF \cdot , \bF \cdot),
  \end{align*}
  with
  \begin{align*}
    \delta_{1}\vol_{\bq}(\delta_{2} \pp, \bF A , \bF B) -\delta_{2}\vol_{\bq}(\delta_{1} \pp, \bF A , \bF B)
     & = \vol_{\bq}(\delta_{2}\pp, \Lie_{A}\delta_{1}\pp , \Lie_{B}\pp) + \vol_{\bq}(\delta_{2}\pp, \Lie_{A}\pp , \Lie_{B}\delta_{1}\pp)
    \\
     & \quad -\vol_{\bq}(\delta_{1}\pp, \Lie_{A}\delta_{2}\pp , \Lie_{B}\pp) - \vol_{\bq}(\delta_{1}\pp, \Lie_{A}\pp , \Lie_{B}\delta_{2}\pp) .
  \end{align*}

  Let us now introduce the $1$-form $\alpha$ on $\partial\body$ defined by
  \begin{equation*}
    \alpha(A) := \vol_{\bq}(\delta_{2}\pp, \delta_{1}\pp, \Lie_{A}\pp).
  \end{equation*}
  By lemma~\ref{lem:alpha-exterior-derivative}, we have then
  \begin{align*}
    d\alpha(A,B) & = \vol_{\bq}(\Lie_{A}\delta_{2}\pp, \delta_{1}\pp, \Lie_{B}\pp) - \vol_{\bq}(\Lie_{B}\delta_{2}\pp, \delta_{1}\pp, \Lie_{A}\pp)
    \\
                 & \quad + \vol_{\bq}(\delta_{2}\pp, \Lie_{A}\delta_{1}\pp, \Lie_{B}\pp) - \vol_{\bq}(\delta_{2}\pp, \Lie_{B}\delta_{1}\pp. \Lie_{A}\pp),
  \end{align*}
  Hence, using Stokes theorem, we get
  \begin{equation*}
    (d\mathcal{W}^{P})_{\pp}(\delta \pp_{1}, \delta \pp_{2}) = \int_{\Sigma_{\body}} d \alpha = \int_{\partial\Sigma_{\body}} \alpha,
  \end{equation*}
  and thus
  \begin{equation*}
    (d\mathcal{W}^{P})_{\pp}(\delta \pp_{1}, \delta \pp_{2}) = \int_{\partial\Sigma_{\body}}  \vol_{\bq}(\delta_{2}\pp, \delta_{1}\pp, \bF \cdot)
    =  \int_{\partial\Sigma_{\body}}  (\delta_{2}\pp \times \delta_{1}\pp) \cdot \bF d\vec \ell_{\body}.
  \end{equation*}
\end{proof}

\subsection{A pressure potential under non-holonomic constraints}
\label{subsec:pressure-potential}

Since $\mathcal{W}^{P}$ is not closed, a Lagrangian for the prescribed pressure virtual power does not exist \textit{a priori} \cite{Pea1956,Sew1965,Sew1967,Bea1970,Bal1976/77,PC1991}. However, if we define the functional $\mathcal{L}^{P}$ by~\eqref{eq:Poincare-primitive}, then we have
\begin{equation*}
  d\mathcal{L}^{P}(\delta \pp) = \mathcal{W}^{P}_{p}(\delta \pp),
\end{equation*}
\emph{provided} that condition~\eqref{eq:Poincare-obstruction} is satisfied. We will now achieve the calculation of $\mathcal{L}^{P}$ and the corresponding non-holonomic constraints. First, we have
\begin{equation*}
  \mathcal{W}^{P}_{\phi^{t}(\pp)}(\bxi(\pp)) = \int_{\Sigma_{\body}} \vol_{\bq} ( \delta \pp , T\phi^{t} \cdot,T\phi^{t} \cdot)
\end{equation*}
and~\eqref{eq:Poincare-primitive} can be written as
\begin{equation*}
  \mathcal{L}^{P}(p) = \int_{-\infty}^{0} e^{t} \left(\int_{\Sigma_{\body}} \vol_{\bq}\big( \bxi(\pp), (e^{t} \bF+(1-e^{t}) \bF_{0}) \cdot, (e^{t} \bF+(1-e^{t}) \bF_{0}) \cdot\big)\right) \, dt ,
\end{equation*}
since $T\phi^{t}(\pp) = e^{t} \bF+(1-e^{t}) \bF_{0}$. The calculation is then straightforward, making use of the multi-linearity of $\vol_{\bq}$, and leads to the Lagrangian (the pressure potential)
\begin{equation*}
  \mathcal{L}^{P}(\pp) = \frac{1}{6} \int_{\Sigma_{\body}} 2\vol_{\bq}(\bxi, \bF \cdot , \bF \cdot)
  + \left(\vol_{\bq}(\bxi, \bF \cdot , \bF_{0} \cdot)+\vol_{\bq}(\bxi, \bF_{0} \cdot , \bF \cdot)\right)
  + 2\vol_{\bq}(\bxi, \bF_{0} \cdot , \bF_{0} \cdot).
\end{equation*}

The non-holonomic constraints are obtained by specifying~\eqref{eq:Poincare-obstruction} for $\mathcal{W}=\mathcal{W}^{P}$ and thus by computing the integral
\begin{equation*}
  \int_{-\infty}^{0} e^{2t}(d\mathcal{W}^{P})_{\phi^{t}(\pp)}(\bxi(\pp), \delta\pp) \, dt
\end{equation*}
where
\begin{equation*}
  (d\mathcal{W}^{P})_{\pp}(\delta \pp_{1}, \delta \pp_{2}) = \int_{\partial \Sigma_{\body}} \left( \delta \pp_{2} \times \delta \pp_{1} \right) \cdot \, \bF\, d\vec \ell_{\body}.
\end{equation*}
We get
\begin{equation*}
  \oint_{\partial \Sigma_{\body}} (\delta \pp \times \bxi(\pp))\cdot
  \left(  \int_{-\infty}^{0} ( e^{3t} \bF+e^{2t} (1-e^{t}) \bF_{0})\, dt\right) d\vec \ell_{\body},
\end{equation*}
which is equal, after integration to
\begin{equation*}
  - \frac{1}{6}\oint_{\partial \Sigma_{\body}} (\bxi(\pp) \times \delta \pp)\cdot (2\bF+\bF_{0}) d\vec \ell_{\body}.
\end{equation*}
Thus, the non-holonomic constraints, which must be satisfied, in order for $\mathcal{L}^{P}$ to be a Lagrangian for $\mathcal{W}^P$ write
\begin{equation*}
  \oint_{\partial \Sigma_{\body}} (\bxi \times \delta \bxi)\cdot (2\bF+\bF_{0}) d\vec \ell_{\body} = 0.
\end{equation*}

We will summarize these results in the following theorem, checking, this time, by a direct variation calculus,
that $\mathcal{L}^{P}$ is a \emph{potential} for $\mathcal{W}^P$ which is globally defined.
Indeed, it states that the functional \eqref{eq:KD-pressure-potential} is a Lagrangian for the pressure boundary term variational problem, provided that the non-holonomic condition \eqref{eq:non-holonomic-condition-body} is satisfied. Remark that
so far, this was established only locally, on the neighborhood
$\mathcal{U}_{\pp_{0}}$ of the reference configuration $\pp_{0}$ defined by~\eqref{eq:Up0}.

\begin{thm}\label{thm:delta-LP}
  Let us consider the functional
  \begin{equation}\label{eq:KD-pressure-potential}
    \mathcal{L}^{P}(\pp) = \frac{1}{6} \int_{\Sigma_{\body}} 2\,\vol_{\bq}(\bxi, \bF \cdot , \bF \cdot)
    + \left(\vol_{\bq}(\bxi, \bF \cdot , \bF_{0} \cdot) + \vol_{\bq}(\bxi, \bF_{0} \cdot , \bF \cdot)\right)
    + 2\,\vol_{\bq}(\bxi, \bF_{0} \cdot , \bF_{0} \cdot),
  \end{equation}
  defined on $\Emb(\body,\espace)$, where $\bxi(\pp) := \pp-\pp_{0}$ is the displacement field. Then
  \begin{equation}\label{eq:derivative-L-KD}
    d\mathcal{L}^{P}(\delta\pp) = \mathcal{W}^{P}(\delta\pp) + \frac{1}{6}\oint_{\partial \Sigma_{\body}} (\bxi \times \delta\bxi)\cdot (2\bF+\bF_{0}) d\vec \ell_{\body},
  \end{equation}
  $d\vec \ell_{\body}$ being the vector length element on $\partial \Sigma_\body$. In particular, the condition for the functional $\mathcal{L}^{P}$ to be a pressure potential (\emph{i.e.}, a primitive of $\mathcal{W}^{P}$) is thus
  \begin{equation}\label{eq:non-holonomic-condition-body}
    \oint_{\partial \Sigma_{\body}} (\bxi \times \delta\bxi)\cdot (2\bF+\bF_{0}) d\vec \ell_{\body} = 0.
  \end{equation}
\end{thm}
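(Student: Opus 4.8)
The plan is to verify \eqref{eq:derivative-L-KD} by a direct first-variation computation of the explicit functional \eqref{eq:KD-pressure-potential}, rather than invoking the Poincaré formula (which only delivers the result on the neighbourhood $\mathcal{U}_{\pp_{0}}$). Throughout I use the viewpoint of \autoref{sec:kinematics}: an embedding and its variation are vector-valued functions $\body \to E$, so that $\bxi = \pp - \pp_{0}$ gives $\delta\bxi = \delta\pp$, the reference term is frozen ($\delta\bF_{0} = 0$), and $\delta\bF \cdot A = \Lie_{A}\delta\pp$ for any vector field $A$ on $\partial\body$. Differentiating the integrand of $6\mathcal{L}^{P}$ by the Leibniz rule and the multilinearity of $\vol_{\bq}$, I split $\delta(6\mathcal{L}^{P})$ into an \emph{undifferentiated part} $\Xi$, collecting the four terms in which $\delta\pp$ lands in the first slot of $\vol_{\bq}$, and a \emph{differentiated part}, collecting the terms carrying a factor $\Lie_{A}\delta\pp$.

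Next I recognise the differentiated part as an exact boundary contribution, exactly as in the proof of lemma~\ref{lem:W-not-closed}. I introduce on $\partial\body$ the two $1$-forms $\beta^{(1)}(A) := \vol_{\bq}(\bxi, \delta\pp, \bF A)$ and $\beta^{(0)}(A) := \vol_{\bq}(\bxi, \delta\pp, \bF_{0}A)$. Applying lemma~\ref{lem:alpha-exterior-derivative} to compute $d\beta^{(1)}$ and $d\beta^{(0)}$, one checks that the $\Lie_{A}\delta\pp$-terms of $\delta(6\mathcal{L}^{P})$ are precisely the corresponding terms of $2\,d\beta^{(1)} + d\beta^{(0)}$. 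Stokes' theorem then turns $\int_{\Sigma_{\body}}(2\,d\beta^{(1)} + d\beta^{(0)})$ into $\oint_{\partial\Sigma_{\body}}(2\beta^{(1)} + \beta^{(0)}) = \oint_{\partial\Sigma_{\body}} \vol_{\bq}(\bxi, \delta\pp, (2\bF + \bF_{0})\cdot)$, which via $\vol_{\bq}(u,v,w) = (u \times v)\cdot w$ is exactly $\oint_{\partial\Sigma_{\body}} (\bxi \times \delta\bxi)\cdot(2\bF + \bF_{0})\, d\vec \ell_{\body}$, the announced boundary term of \eqref{eq:derivative-L-KD}.

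What remains is the bulk, and here lies the main obstacle: the exterior derivatives $d\beta^{(1)}$, $d\beta^{(0)}$ also produce terms carrying a factor $\Lie_{A}\bxi$, which I must cancel against $\Xi$. The linchpin is the identity $\Lie_{A}\bxi = \bF A - \bF_{0}A$ (immediate from $T\bxi = \bF - \bF_{0}$), which turns every such term back into a combination of $\vol_{\bq}$ evaluated on $\bF$ and $\bF_{0}$. A short multilinear computation — writing $[x,y] := \vol_{\bq}(\delta\pp, x, y)$ and expanding with $a = \bF A$, $a_{0} = \bF_{0}A$, $b = \bF B$, $b_{0} = \bF_{0}B$ — then shows that $\Xi$ minus these correction terms collapses to $6[a,b] = 6\,\vol_{\bq}(\delta\pp, \bF\cdot, \bF\cdot)$. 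Integrating over $\Sigma_{\body}$ gives $6\,\mathcal{W}^{P}(\delta\pp)$ by \eqref{eq:pressure-power}, so that $\delta(6\mathcal{L}^{P}) = 6\,\mathcal{W}^{P}(\delta\pp) + \oint_{\partial\Sigma_{\body}}(\bxi \times \delta\bxi)\cdot(2\bF + \bF_{0})\,d\vec \ell_{\body}$; dividing by $6$ yields \eqref{eq:derivative-L-KD}, and \eqref{eq:non-holonomic-condition-body} is its boundary term set to zero.

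A cleaner equivalent route avoids term-by-term bookkeeping: with $G(s) := s\bF + (1-s)\bF_{0}$, one has $\mathcal{L}^{P}(\pp) = \int_{0}^{1}\!\int_{\Sigma_{\body}} \vol_{\bq}(\bxi, G(s)\cdot, G(s)\cdot)\, ds$. Differentiating under the integral sign, using $\Lie_{A}\bxi = \bF A - \bF_{0}A = G'(s)A$ and the same Stokes step, the bulk integrand becomes $\frac{d}{ds}\big[s\,\vol_{\bq}(\delta\pp, G(s)\cdot, G(s)\cdot)\big]$, which telescopes over $s \in [0,1]$ to $\vol_{\bq}(\delta\pp, \bF\cdot, \bF\cdot)$, while the boundary integrand integrates to $\tfrac{1}{6}\oint_{\partial\Sigma_{\body}}(\bxi \times \delta\bxi)\cdot(2\bF + \bF_{0})\,d\vec\ell_{\body}$. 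Either way, the only genuine difficulty is the sign- and antisymmetrisation-bookkeeping in the exterior-derivative step, and the identity $\Lie_{A}\bxi = \bF A - \bF_{0}A$ is what makes the obstruction terms reassemble into $\mathcal{W}^{P}$.
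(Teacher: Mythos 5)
Your first route is, in substance, the paper's own proof. The paper sets $\omega^{1},\omega^{2},\omega^{3}$ equal to the three integrands of $6\mathcal{L}^{P}$, computes $\delta\omega$ in the same vector-valued-function calculus ($\delta\bxi=\delta\pp$, $\delta(\bF A)=\Lie_{A}\delta\pp$), and integrates by parts with the single $1$-form $\alpha(A)=\vol_{\bq}(\bxi,\delta\pp,\Lie_{A}(2\pp+\pp_{0}))$ — which is exactly your $2\beta^{(1)}+\beta^{(0)}$, so your split into two forms is cosmetic — then applies lemma~\ref{lem:alpha-exterior-derivative} and Stokes; the identity $\Lie_{A}\bxi=\bF A-\bF_{0}A$, which you rightly single out as the linchpin, is used tacitly in the paper's ``after simplification'' step. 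I checked your multilinear bookkeeping: writing $[x,y]=\vol_{\bq}(\delta\pp,x,y)$, the coefficients of $[a,b]$, $[a,b_{0}]$, $[a_{0},b]$, $[a_{0},b_{0}]$ in $\delta\omega-d\alpha$ come out as $6,0,0,0$, so the collapse to $6\,\vol_{\bq}(\delta\pp,\bF\cdot,\bF\cdot)$ is correct.

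Your alternative homotopy route is also correct and is a genuinely different, cleaner presentation. The representation $\mathcal{L}^{P}(\pp)=\int_{0}^{1}\int_{\Sigma_{\body}}\vol_{\bq}(\bxi,G(s)\cdot,G(s)\cdot)\,ds$ with $G(s)=s\bF+(1-s)\bF_{0}$ is a purely algebraic identity (from $\int_{0}^{1}s^{2}\,ds=\int_{0}^{1}(1-s)^{2}\,ds=\tfrac{1}{3}$ and $\int_{0}^{1}s(1-s)\,ds=\tfrac{1}{6}$); with $\beta_{s}(A)=s\,\vol_{\bq}(\bxi,\delta\pp,G(s)A)$ one indeed has $\delta\bigl[\vol_{\bq}(\bxi,G(s)\cdot,G(s)\cdot)\bigr]-d\beta_{s}=\frac{d}{ds}\bigl[s\,\vol_{\bq}(\delta\pp,G(s)\cdot,G(s)\cdot)\bigr]$, which telescopes to the bulk term $\mathcal{W}^{P}(\delta\pp)$, while $\int_{0}^{1}sG(s)\,ds=\tfrac{1}{6}(2\bF+\bF_{0})$ reproduces the boundary term. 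What this variant buys is worth naming: it is precisely the Poincar\'{e}-integrator computation of \autoref{subsec:Poincare-integrator} made concrete (substitute $s=e^{t}$, so that $G(s)=T\phi^{t}$), but since the $s$-integral and the pointwise variation only involve $\bxi$, $\bF$, $\bF_{0}$ — the convex combination $s\pp+(1-s)\pp_{0}$ need not itself be an embedding — it proves \eqref{eq:derivative-L-KD} globally on $\Emb(\body,\espace)$, which is exactly the improvement over the local derivation on $\mathcal{U}_{\pp_{0}}$ that the theorem's direct-variation proof is meant to supply.
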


\begin{proof}[Proof of theorem \ref{thm:delta-LP}]
  Set
  \begin{equation}\label{eq:omegai}
    \begin{aligned}
      \omega^{1} & := 2\,\vol_{\bq}(\bxi, \bF \cdot , \bF \cdot)
      \\
      \omega^{2} & := \vol_{\bq}(\bxi, \bF \cdot , \bF_{0} \cdot) + \vol_{\bq}(\bxi, \bF_{0} \cdot ,\bF \cdot)
      \\
      \omega^{3} & := 2\,\vol_{\bq}(\bxi, \bF_{0} \cdot , \bF_{0} \cdot),
    \end{aligned}
  \end{equation}
  and $\omega := \omega^{1} + \omega^{2} + \omega^{3}$, so that
  \begin{equation}\label{eq:dLp}
    \mathcal{L}^{P} = \frac{1}{6} \int_{\Sigma_{\body}} \omega , \qquad \delta\mathcal{L}^{P} = \frac{1}{6} \int_{\Sigma_{\body}} \delta\omega .
  \end{equation}
  We have
  \begin{align*}
    \delta\omega^{1}(A,B) & = 2\,\left(\vol_{\bq}(\delta\pp, \Lie_{A}\pp , \Lie_{B}\pp) + \vol_{\bq}(\bxi, \Lie_{A}\delta\pp , \Lie_{B}\pp) + \vol_{\bq}(\bxi, \Lie_{A}\pp , \Lie_{B}\delta\pp)\right),
    \\
    \delta\omega^{2}(A,B) & = \vol_{\bq}(\delta\pp, \Lie_{A}\pp , \Lie_{B}\pp_{0}) + \vol_{\bq}(\bxi, \Lie_{A}\delta\pp , \Lie_{B}\pp_{0})
    \\
                          & \quad + \vol_{\bq}(\delta\pp, \Lie_{A}\pp_{0} , \Lie_{B}\pp) + \vol_{\bq}(\bxi, \Lie_{A}\pp_{0} , \Lie_{B}\delta\pp),
    \\
    \delta\omega^{3}(A,B) & = 2\,\vol_{\bq}(\delta\pp, \Lie_{A}\pp_{0} , \Lie_{B}\pp_{0}),
  \end{align*}
  and thus
  \begin{align*}
    \delta\omega (A,B) & = \vol_{\bq}(\delta\pp, \Lie_{A}\pp , \Lie_{B}(2\pp +\pp_{0})) + \vol_{\bq}(\delta\pp, \Lie_{A}\pp_{0} , \Lie_{B}(2\pp_{0} + \pp))
    \\
                       & \quad + \vol_{\bq}(\bxi, \Lie_{A}\delta\pp , \Lie_{B}(2\pp +\pp_{0})) - \vol_{\bq}(\bxi, \Lie_{B}\delta\pp, \Lie_{A}(2\pp + \pp_{0})).
  \end{align*}
  Now we need to do an integration by part to get rid of the terms that contain Lie derivatives of $\delta\pp$. To do this, we introduce the $1$-form $\alpha$ on $\partial\body$ defined by
  \begin{equation*}
    \alpha(A) := \vol_{\bq}(\bxi, \delta\pp, \Lie_{A}(2\pp + \pp_{0})).
  \end{equation*}
  such that, by lemma \ref{lem:alpha-exterior-derivative},
  \begin{align*}
    d\alpha(A,B) & = \vol_{\bq}(\Lie_{A}\bxi, \delta\pp, \Lie_{B}(2\pp + \pp_{0})) - \vol_{\bq}(\Lie_{B}\bxi, \delta\pp, \Lie_{A}(2\pp + \pp_{0}))
    \\
                 & \quad + \vol_{\bq}(\bxi, \Lie_{A}\delta\pp, \Lie_{B}(2\pp + \pp_{0})) - \vol_{\bq}(\bxi, \Lie_{B}\delta\pp, \Lie_{A}(2\pp + \pp_{0})).
  \end{align*}
  Therefore, after simplification, we have
  \begin{align*}
    \delta\omega (A,B) & = 6 \vol_{\bq}(\delta\pp, \Lie_{A}\pp, \Lie_{B}\pp) + d\alpha(A,B),
  \end{align*}
  and thus
  \begin{equation*}
    \delta \mathcal{L}^{P} = \frac{1}{6} \int_{\Sigma_{\body}} \delta\omega = \int_{\Sigma_{\body}} \vol_{\bq}(\delta\pp, \bF \cdot , \bF\cdot )+ \frac{1}{6}  \int_{\partial \Sigma_{\body}} \alpha,
  \end{equation*}
  where the restriction of $\alpha$ to $\partial \Sigma_{\body}$ can be written as
  \begin{equation*}
    \alpha = \vol_{\bq}(\bxi, \delta\pp, (\bF_{0}+ 2 \bF) \cdot )  = (\bxi \times \delta\bxi)\cdot (\bF_{0}+ 2 \bF) d \vec \ell_{\body},
  \end{equation*}
  $d\vec \ell_{\body}$ being the vector length element relative to the metric $\bgamma_{0} = \pp_{0}^{*}\bq$ on $\body$. Finally
  \begin{equation*}
    \delta \mathcal{L}^{P} = \mathcal{W}^{P}(\delta\pp) + \frac{1}{6}\oint_{\partial \Sigma_{\body}} (\bxi \times \delta\pp)\cdot (2\bF+\bF_{0}) d\vec \ell_{\body},
  \end{equation*}
  which ends the proof.
\end{proof}

\section{Link with other existing formulations}
\label{sec:other-formulations}

The goal of this section is to relate the present work with other existing studies concerning the formulation of a pressure potential and the required constraints. These results are all expressed on a reference configuration $\Omega_{0}$ and not on the body $\body$.

\subsection{Recovering Pearson--Sewell potential and Beatty conditions}
\label{subsec:Sewell-Beatty}

The first formulation of a pressure potential seems to have been produced in 1956 by Pearson~\cite[Eq. (25) on p. 142]{Pea1956} and then reobtained by Sewell ten years later (see~\cite[Eq. (32) on p. 407]{Sew1965} and~\cite[Eq. (89) on p. 341]{Sew1967}). In all these works, the potential is written in components. Its intrinsic expression seems to have been given for the first time by Beatty in~\cite[Eq. (4.3) on p. 373]{Bea1970} but with some typos. Its (corrected) expression is:
\begin{equation}\label{eq:LP-Beatty}
  \mathcal{L}^{P}(\varphi) = \frac{P}{3} \int_{\Sigma_{0}} \left(J_{\varphi} {\bF_{\varphi}}^{-1}\, \bxi
  + \frac{1}{2}\Big((\tr \bF_{\varphi})\bxi - \bF_{\varphi} \bxi \Big) + \bxi \right) \cdot \nn_{0} \, da_{0},
\end{equation}
where $\bxi = \varphi-\id$. This expression corresponds to the potential $\mathcal{L}^{P}$ given by~\eqref{eq:KD-pressure-potential}, if we identify the body $\body$ with a reference configuration $\Omega_{0}$, embedded in Euclidean space $\espace$  (see below for a proof). Indeed, then, we make the identifications:
\begin{equation}\label{eq:Identif-B-Ref}
  \Sigma_{\body} = \Sigma_{0}, \qquad \pp\equiv \varphi, \qquad \bF_{0}\equiv \Id, \qquad \bF\equiv \bF_{\varphi}, \qquad \bxi \equiv \varphi - \id.
\end{equation}

Then, our non-holonomic constraints~\eqref{eq:non-holonomic-condition-body}, formulated on the body, recast on $\Omega_{0}$ as
\begin{equation}\label{eq:non-holonomic-condition-ref}
  \oint_{\partial \Sigma_{0}}  \left(\bxi \times \delta \bxi \right)
  \cdot \,( 2 \bF_\varphi +\Id) d\vec \ell_{0}=0,
  \qquad
  \bxi= \varphi-\id.
\end{equation}
This is an improvement compared to Beatty conditions~\cite[Eq. (4.6) on p. 374]{Bea1970},
\begin{equation}\label{eq:Beatty-conditions}
  \oint_{\partial \Sigma_{0}}  \left(\bxi \times \delta \bxi \right)
  \cdot \,d\vec \ell_{0}=0
  \qquad \textrm{and} \qquad \oint_{\partial \Sigma_{0}}   \left(\bxi \times \delta \bxi \right)
  \cdot \,\bF_{\varphi} d\vec \ell_{0}=0,
\end{equation}
which are stronger since~\eqref{eq:Beatty-conditions} implies \eqref{eq:non-holonomic-condition-ref}, but the converse does not hold.

\begin{rem}
  Both conditions~\eqref{eq:non-holonomic-condition-ref} and \eqref{eq:Beatty-conditions} are satisfied, in particular, when the variations $\delta \bxi$ vanish on the closed contour $\partial \Sigma_{0}$. More generally, in order for them to be verified it is sufficient that the virtual displacement $\delta \pp=\delta \bxi \in T\Emb(\body, \espace)$ remains collinear to the displacement $\bxi=\pp-\pp_{0}$ (\emph{i.e.}, $\bxi \times \delta\bxi=0$) all along the contour  $\partial \Sigma_{\body}$, a mechanistic condition indeed. Note finally that the constraints \eqref{eq:non-holonomic-condition-ref} and \eqref{eq:non-holonomic-condition-body} are trivially satisfied when $\partial \Sigma=\pp(\partial \Sigma_{\body}) =\emptyset$, as in the case of an uniform pressure applied on the entire external surface of a structure or on the entire surface of a fully embedded cavity.
\end{rem}

We conclude by providing a detailed calculation of how~\eqref{eq:KD-pressure-potential} recasts as~\eqref{eq:LP-Beatty} when we identify the body $\body$ with $\Omega_{0}$ embedded in Euclidean space $\espace$. Thanks to the identifications \eqref{eq:Identif-B-Ref}, $\mathcal{L}^P$ can be written as
\begin{equation*}
  \mathcal{L}^P(\varphi) = \frac{P}{6}\int_{\Sigma_{0}} \omega^{1} + \omega^{2} + \omega^{3},
\end{equation*}
where, according to the definition of the 2-forms $\omega^{i}$ by \eqref{eq:omegai},
\begin{align*}
  \omega^{1} & = 2\vol_{\bq}(\bxi, \bF_{\varphi} \cdot , \bF_{\varphi} \cdot) = 2(\det \bF_{\varphi}) i_{{\bF_{\varphi}}^{-1}\bxi} \vol_{\bq} = 2J_{\varphi} ({\bF_{\varphi}}^{-1}\bxi \cdot \nn_{0}) \, da_{0},
  \\
  \omega^{2} & = \vol_{\bq}(\bxi, \bF_{\varphi} \cdot , \cdot) + \vol_{\bq}(\bxi, \cdot , \bF_{\varphi} \cdot) = \left((\tr \bF_{\varphi})\bxi - \bF_{\varphi} \bxi\right) \cdot \nn_{0} \, da_{0},
  \\
  \omega^{3} & = 2\vol_{\bq}( \bxi, \cdot , \cdot) = i_{\bxi} \vol_{\bq} = 2(\bxi \cdot \nn_{0}) \, da_{0}.
\end{align*}
The first and third equalities are straightforward using~\eqref{eq:nds}. The second equality results from the following observation. The $2$-form $\omega^{2}$ is proportional to the area element $da_{0} = i_{\nn_{0}} \vol_{\bq}$ of $\partial\Omega_{0}$ (see \autoref{sec:differential-forms}). Therefore, if $\ee_{1},\ee_{2}$ is a direct orthonormal basis of $T_{\xx_{0}}\partial\Omega_{0}$, the tangent space to the boundary of $\Omega_{0}$, we have $\omega^{2} = \lambda \, da_{0}$, where $\lambda = \omega^{2}(\ee_{1},\ee_{2})$. Thus, writing $\bxi = \xi^{1}\ee_{1} + \xi^{2}\ee_{2} + (\bxi\cdot\nn_{0})\nn_{0}$, we get
\begin{align*}
  \omega^{2}(\ee_{1},\ee_{2}) & = \vol_{\bq}(\bxi, \bF_{\varphi} \ee_{1} , \ee_{2}) + \vol_{\bq}(\bxi, \ee_{1} , \bF_{\varphi} \ee_{2})
  \\
                              & = (\bxi\cdot\nn_{0})\vol_{\bq}(\nn_{0}, \bF_{\varphi} \ee_{1} , \ee_{2}) + (\bxi\cdot\nn_{0}) \vol_{\bq}(\nn_{0}, \ee_{1} , \bF_{\varphi} \ee_{2})
  \\
                              & \quad + \xi^{1} \vol_{\bq}(\ee_{1}, \bF_{\varphi} \ee_{1} , \ee_{2}) + \xi^{2}\vol_{\bq}(\ee_{2}, \ee_{1} , \bF_{\varphi} \ee_{2})
  \\
                              & = (\bxi\cdot\nn_{0})\left\{\vol_{\bq}(\nn_{0}, \bF_{\varphi} \ee_{1} , \ee_{2}) + \vol_{\bq}(\nn_{0}, \ee_{1} , \bF_{\varphi} \ee_{2}) + \vol_{\bq}(\bF_{\varphi} \nn_{0}, \ee_{1} , \ee_{2}) \right\}
  \\
                              & \quad + \xi^{1} \vol_{\bq}(\ee_{1}, \bF_{\varphi} \ee_{1} , \ee_{2}) + \xi^{2}\vol_{\bq}(\ee_{2}, \ee_{1} , \bF_{\varphi} \ee_{2}) - (\bxi\cdot\nn_{0})\vol_{\bq}(\bF_{\varphi} \nn_{0}, \ee_{1} , \ee_{2}).
\end{align*}
But
\begin{equation*}
  \vol_{\bq}(\nn_{0}, \bF_{\varphi} \ee_{1} , \ee_{2}) + \vol_{\bq}(\nn_{0}, \ee_{1} , \bF_{\varphi} \ee_{2}) + \vol_{\bq}(\bF_{\varphi} \nn_{0}, \ee_{1} , \ee_{2}) = \tr \bF_{\varphi}
\end{equation*}
and
\begin{multline*}
  \xi^{1} \vol_{\bq}(\ee_{1}, \bF_{\varphi} \ee_{1} , \ee_{2}) + \xi^{2}\vol_{\bq}(\ee_{2}, \ee_{1} , \bF_{\varphi} \ee_{2}) - (\bxi\cdot\nn_{0})\vol_{\bq}(\bF_{\varphi} \nn_{0}, \ee_{1} , \ee_{2})
  \\
  = - \vol_{\bq}(\bF_{\varphi}\bxi, \ee_{1} , \ee_{2}) = - \bF_{\varphi}\bxi\cdot\nn_{0},
\end{multline*}
because $\vol_{\bq}(\nn_{0}, \ee_{1} , \ee_{2})=1$. Hence
\begin{equation*}
  \lambda = \omega^{2}(\ee_{1},\ee_{2}) = (\bxi\cdot\nn_{0})\tr \bF_{\varphi} - \bF_{\varphi}\bxi\cdot\nn_{0} = \left((\tr \bF_{\varphi})\bxi - \bF_{\varphi} \bxi\right) \cdot \nn_{0}.
\end{equation*}
We can therefore rewrite~\eqref{eq:KD-pressure-potential} as~\eqref{eq:LP-Beatty}.

\subsection{An alternative pressure potential}
\label{subsec:alternative-potential}

A Lagrangian formulation (together with its non-holonomic constraints) for the pressure boundary conditions is not unique. There exist in the literature alternative formulations~\cite{Bal1976/77,Cia1988,PC1991}, valid under the stronger condition $\delta \bxi=0$ on $\partial \Sigma_{0}$ (which implies~\eqref{eq:non-holonomic-condition-ref}). Such an alternative pressure potential has been suggested in~\cite[Eq. (1.36)]{Bal1976/77} or \cite[Theorem 2.7-1]{Cia1988}). It can be written as
\begin{equation}\label{eq:B-pressure-potential}
  \tilde{\mathcal{L}}^{P}(\varphi) := \frac{P}{3} \int_{\Sigma_{0}} J_{\varphi} {\bF_{\varphi}}^{-1}\varphi \cdot\nn \, da_{0},
\end{equation}
and differs from $\mathcal{L}^{P}$ given by~\eqref{eq:LP-Beatty} (and deduced from~\eqref{eq:KD-pressure-potential}). On the body, this potential recasts as
\begin{equation*}
  \tilde{\mathcal{L}}^{P}(\pp) := \frac{P}{3} \int_{\Sigma_{\body}} \vol_{\bq}(\pp,\bF\cdot,\bF\cdot),
\end{equation*}
where $\pp=\varphi \circ \pp_{0}$, $\bF=\bF_\varphi \bF_{0}$ and $\Sigma_{\body}=\pp_{0}(\Sigma_{0})$. Its variation can be derived the same way as in the proof of theorem~\ref{thm:delta-LP}, and we get
\begin{equation}\label{eq:derivative-L-B}
  \delta \tilde{\mathcal{L}}^{P}(\delta\pp) = \mathcal{W}^{P}(\delta\pp) + \frac{P}{6}\oint_{\partial \Sigma_{\body}} (\pp \times \delta\pp)\cdot 2\bF d\vec \ell_{\body}.
\end{equation}
Thus, the non-holonomic constraints for $\tilde{\mathcal{L}}^{P}$ to be a pressure potential can be written as
\begin{equation*}
  \oint_{\partial \Sigma_{\body}} (\pp \times \delta\pp)\cdot \bF d\vec \ell_{\body}=0.
\end{equation*}
In an odd way, these constraints depend on the embedding $\pp$ itself, whereas~\eqref{eq:non-holonomic-condition-body} depends instead, in a fully mechanistic manner, on the displacement $\bxi=\pp-\pp_{0}$.

\begin{rem}
  The two Lagrangians $\mathcal{L}^{P}$ and $\tilde{\mathcal{L}}^{P}$ are both valid for mechanical problems for which $\delta\bxi = \delta\pp = 0$ on the boundary $\partial \Sigma_{\body}$ or if $\partial \Sigma_{\body} = \emptyset$. But, contrary to ${\mathcal{L}}^{P}$, $\tilde{\mathcal{L}}^{P}$ is not a pressure potential anymore when the virtual displacement $\delta \bxi$ only remains collinear to the displacement $\bxi=\pp-\pp_{0}$ all along the contour $\partial \Sigma_{\body}$.
\end{rem}

We can calculate explicitly the difference between the variations of $\mathcal{L}^{P}$ and $\tilde{\mathcal{L}}^{P}$. If we set,
\begin{equation*}
  \delta \mathcal{L}^{P} = \frac{1}{6} \int_{\Sigma_{\body}} \delta \omega, \quad \text{and} \quad \delta \tilde{\mathcal{L}}^{P} = \frac{1}{6} \int_{\Sigma_{\body}} \delta \tilde{\omega},
\end{equation*}
then, we have
\begin{equation*}
  \delta \omega - \delta \tilde{\omega} = d\beta,
\end{equation*}
where $\beta$ is the following one-form on $\partial\body$,
\begin{equation*}
  \beta(A) := \vol_{\bq}(\pp,\delta\pp,\Lie_{A}\pp_{0}) - \vol_{\bq}(\pp_{0},\delta\pp,\Lie_{A}(2\pp+\pp_{0})).
\end{equation*}
By Stokes--Ampère formula, the two variations $\delta\mathcal{L}^{P}(\delta\pp)$ and $\delta\tilde{\mathcal{L}}^{P}(\delta\pp)$ of pressure potentials differ then by the contour integral
\begin{align*}
  \frac{1}{6} \oint_{\partial \Sigma_{\body}} \beta & = \frac{1}{6} \oint_{\partial \Sigma_{\body}}
  \vol_{\bq}(\pp,\delta\pp,\bF_{0}\cdot) - \vol_{\bq}(\pp_{0},\delta\pp,(2\bF+\bF_{0})\cdot))
  \\
                                                    & = \frac{1}{6} \oint_{\partial \Sigma_{\body}}
  (\pp\times \delta\pp)\cdot \bF_{0} d \vec \ell_{\body} -
  \frac{1}{6} \oint_{\partial \Sigma_{\body}}  (\pp_{0}\times \delta\pp)\cdot (2\bF+\bF_{0})d \vec \ell_{\body}.
\end{align*}
Therefore, and as expected, the equality $\delta \mathcal{L}^{P} = \delta \tilde{\mathcal{L}}^{P}$ holds when $\delta\pp=\delta\bxi=0$ on $\Sigma_{\body}$ or if $\partial\Sigma_{\body} = \emptyset$.

\section{Conclusion}

We have formulated hyper-elasticity as a variational problem directly on the body $\body$, a three-dimensional compact and orientable manifold with boundary (equipped with a mass measure, assumed to be a non vanishing 3-form), and not necessarily embedded as a reference configuration in space. Accordingly, we have formulated the dead load and pressure types boundary conditions on $\partial \body$. Concerning prescribed pressure, the Poincaré lemma (extended to infinite dimension) has allowed us to obtain, in a straightforward manner, both the pressure potential and optimal non-holonomic constraints for such a potential to exist. The proposed methodology is based on the interpretation of virtual powers as one-forms on the configuration space $\Emb(\body, \espace)$. It is general and can be applied to many others situations. This has allowed us to derive in a systematic way Lagrangians (potentials) in continuum mechanics, and if necessary, to formulate non-holonomic constraints for such potentials to exist.

Finally, we have chosen to consider smooth embeddings rather than $C^p$-embeddings (for instance as in \cite{Seg1986}). Indeed, this choice leads to a very general definition of virtual powers as tensor-distributions (a concept introduced by Lichnerowicz~\cite{Lic1994}).

\appendix

\section{Pullback, pushforward and Lie derivative}
\label{sec:pullback}

The fundamental concept in differential geometry that allows to pass from spatial variables defined on the deformed/actual configuration $\Omega_{\pp}$, to material variables, defined on the body $\body$ or on the reference configuration $\Omega_{0}$ (and vice versa) are the operations \emph{pullback} and \emph{pushforward} (see~\cite[4.7 p. 68]{MH1994} or \cite{SH1997} or~\cite[Chapter V]{Lan1999}, for instance). More precisely, given a diffeomorphism $\phi:M \to N$ between two differentiable manifolds, the pullback $\phi^{*}$ transforms a tensor field $\bt$ defined on $N$ into a tensor field $\phi^{*}\bt$ defined on $M$, while the pushforward $\phi_{*}$ transforms a tensor field $\bT$ defined on $M$ into a tensor field $\phi_{*}\bT$ defined on $N$. The notion of pullback and push forward naturally extend to the case where $\phi$ is an embedding.

\begin{exam}\label{exam:pullbacks-order-two} Two usual examples of pullback/pushforward by $\pp\in \Emb(\body, \espace)$ are
  \begin{equation*}
    \bT=\pp^*\bt=\bF^\star (\bt\circ \pp) \bF,
    \qquad
    \bt=\pp_*\bT=\bF^{-\star} \bT\, \bF^{-1} \circ \pp^{-1},
  \end{equation*}
  when $\bt$ and $\bT$ are second-order covariant tensor fields, and
  \begin{equation*}
    \bT=\pp^*\bt=\bF^{-1} (\bt\circ \pp) \bF^{-\star},
    \qquad
    \bt=\pp_*\bT=\bF \bT\, \bF^{\star} \circ \pp^{-1},
  \end{equation*}
  when $\bt$ and $\bT$ are second-order contravariant tensor fields. Here, $\bF=T\pp: T\body \to T\Omega_{\pp}$ is the linear tangent mapping of $\pp$ and $\bF^{\star}: T^{\star}\Omega_{\pp} \to T^{\star}\body $ is its transpose.
\end{exam}

\begin{rem}\label{rem:com-contractions}
  Pullback and pushforward operations are inverse to each other, meaning that $\phi^{*} = (\phi_{*})^{-1} = (\phi^{-1})_{*}$. They commute moreover with any contraction between covariant and contravariant indices.
\end{rem}

The \emph{Lie derivative} is the infinitesimal version of the pullback. Indeed, let $\uu$ be a vector field on $M$, $\varphi(t)$ be its flow, and let $\bt$ be a tensor field on $M$. The \emph{Lie derivative} of $\bt$ with respect to $\uu$, noted $\Lie_{\uu} \bt$ is defined as
\begin{equation*}
  \Lie_{\uu} \bt := \left.\frac{\partial}{\partial t}\right|_{t=0} \varphi(t)^{*} \bt .
\end{equation*}
When $\bt := \vv$ is a vector field, $\Lie_{\uu} \vv$ is just the \emph{Lie bracket} $[\uu,\vv]=-[\vv,\uu]$ of $\uu$ and $\vv$, and we have moreover
\begin{equation}\label{eq:Lie-derivative-property}
  \Lie_{[\uu,\vv]} \bt = \Lie_{\uu} \Lie_{\vv} \bt - \Lie_{\vv} \Lie_{\uu} \bt .
\end{equation}

The Lie derivative extends without difficulty to \emph{time-dependent vector fields $\uu(t)$}~\cite[Section 1.6]{MH1994}. In that case, the flow $\varphi(t,s)$ of $\uu(t)$ depends on two parameters and one defines
\begin{equation*}
  \Lie_{\uu(s)} \bt := \left.\frac{\partial}{\partial t}\right|_{t=s} \varphi(t,s)^{*} \bt .
\end{equation*}

The following result extends the property $ \partial_{t}\varphi(t)^{*} \bt = \varphi(t)^{*} \Lie_{\uu} \bt$ (see~\cite[Section 1.6]{MH1994}), when the path of diffeomorphisms $\varphi(t)$ is replaced by a path of embeddings $\pp(t): \body \to \espace$.

\begin{lem}\label{lem:magic-formula}
  Let $\pp(t)$ be a path of embeddings, $\uu = (\partial_{t}\pp)\circ \pp^{-1}$ be its (right) Eulerian velocity and $\bt$ be a tensor field defined along $\pp(t)$ (\textit{i.e.}, on $\Omega_{p(t)} = \pp(t)(\body)$ and possibly time-dependent). Then
  \begin{equation*}
    \partial_{t}(\pp^{*}\bt) = \pp^{*}\left( \partial_{t}\bt + \Lie_{\uu}\bt \right).
  \end{equation*}
\end{lem}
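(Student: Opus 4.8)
The plan is to reduce this time-dependent statement to the autonomous Lie-derivative identity $\partial_{t}\varphi(t)^{*}\bt = \varphi(t)^{*}\Lie_{\uu}\bt$ recalled just before the lemma, by disentangling the motion of the domains $\Omega_{\pp(t)}$ from the explicit time dependence of $\bt$. Fix a time $s$ and introduce the two-parameter family of diffeomorphisms
\[
  \psi(t,s) := \pp(t)\circ\pp(s)^{-1} : \Omega_{\pp(s)} \to \Omega_{\pp(t)},
\]
which is well defined precisely because each $\pp$ is an embedding. By construction $\psi(s,s) = \id_{\Omega_{\pp(s)}}$, and a direct computation gives $\partial_{t}\psi(t,s) = \uu(t)\circ\psi(t,s)$, so that $t \mapsto \psi(t,s)$ is the flow, issued from time $s$, of the time-dependent right Eulerian velocity $\uu(t) = (\partial_{t}\pp)\circ\pp(t)^{-1}$. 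This first step is the one that encodes the interpretation of $\uu$ as the generator of the transition maps between successive configurations.

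Next I would use the factorization $\pp(t) = \psi(t,s)\circ\pp(s)$ together with the functoriality of the pullback, $(\phi\circ\eta)^{*} = \eta^{*}\phi^{*}$, to write
\[
  \pp(t)^{*}\bt(t) = \pp(s)^{*}\big(\psi(t,s)^{*}\bt(t)\big).
\]
Since $\pp(s)$ does not depend on $t$, the operator $\partial_{t}|_{t=s}$ can be pulled through the (linear) pullback $\pp(s)^{*}$, leaving
\[
  \partial_{t}\big|_{t=s}\big(\pp(t)^{*}\bt(t)\big) = \pp(s)^{*}\Big(\partial_{t}\big|_{t=s}\big(\psi(t,s)^{*}\bt(t)\big)\Big).
\]

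It then remains to differentiate the inner two-parameter object $\psi(t,s)^{*}\bt(t)$ at $t=s$, which I would do by the chain rule, separating its two occurrences of $t$. The derivative through the flow parameter yields, by the very definition of the Lie derivative of a time-dependent tensor along a time-dependent flow recalled before the statement, the term $\Lie_{\uu(s)}\bt(s)$; the derivative through the explicit time dependence of $\bt$ yields $\psi(s,s)^{*}\partial_{t}\bt(s) = \partial_{t}\bt(s)$, using $\psi(s,s)=\id$. Adding the two contributions gives $\partial_{t}|_{t=s}(\psi(t,s)^{*}\bt(t)) = \partial_{t}\bt(s) + \Lie_{\uu(s)}\bt(s)$, hence $\partial_{t}|_{t=s}(\pp(t)^{*}\bt(t)) = \pp(s)^{*}(\partial_{t}\bt + \Lie_{\uu}\bt)(s)$; since $s$ is arbitrary, this is the claim.

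The main obstacle is the last step, namely the rigorous justification that the $t$-dependence through the flow $\psi(t,s)$ and through $\bt(t)$ contribute \emph{additively} at $t=s$ — a chain rule in a two-parameter setting — and, more foundationally, that one may legitimately speak of a flow and of a Lie derivative on the \emph{moving} domains $\Omega_{\pp(t)}$. Both points are handled precisely by the time-dependent Lie-derivative formalism of \cite[Section 1.6]{MH1994} invoked before the lemma; once that formalism is granted, the remaining computation is entirely routine.
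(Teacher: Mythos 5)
Your proof is correct and matches the paper's intent: the paper actually states Lemma~\ref{lem:magic-formula} \emph{without} a written proof, presenting it only as the extension to embeddings of the identity $\partial_{t}\varphi(t)^{*}\bt=\varphi(t)^{*}\Lie_{\uu}\bt$ from \cite[Section 1.6]{MH1994}, and your reduction via the two-parameter transition maps $\psi(t,s)=\pp(t)\circ\pp(s)^{-1}$ (which you correctly verify to be the flow of $\uu(t)$ issued from time $s$), followed by the factorization $\pp(t)^{*}=\pp(s)^{*}\circ\psi(t,s)^{*}$ and the chain-rule splitting into $\Lie_{\uu(s)}\bt(s)+\partial_{t}\bt(s)$, is precisely the argument the paper's preamble implies. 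The one delicate point you flag --- the additive two-parameter chain rule and the meaning of $\partial_{t}\bt$ on the moving domains $\Omega_{\pp(t)}$ --- is handled by the time-dependent Lie-derivative formalism the paper cites, and is in any case already presupposed by the statement itself, so there is no gap.
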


\section{Differential forms}
\label{sec:differential-forms}

For basic materials on differential forms one may look at~\cite{Car1970,AMR1988,Lan1999}. Let $\Omega^{k}(M)$ be the space of \emph{differentials forms} of degree $k$, \textit{i.e.}, covariant tensor fields $\omega$ on the manifold $M$ of order $k$, which are alternate. The \emph{exterior derivative} is a differential operator of order one
\begin{equation*}
  d : \Omega^{k}(M) \to \Omega^{k+1}(M)
\end{equation*}
which extends the differential of a function to differential forms of any degree. For instance, in any local coordinate system $(x^{i})$, we have
\begin{equation*}
  (d\alpha)_{ij} = \partial_{i} \alpha_{j} - \partial_{j} \alpha_{i},
\end{equation*}
for a $1$-form $\alpha$ and
\begin{equation*}
  (d\omega)_{ijk} = \partial_{i} \alpha_{jk} - \partial_{j} \alpha_{ik} + \partial_{k} \alpha_{ij},
\end{equation*}
for a $2$-form $\alpha$. Given a vector field $A$ on $M$, the \emph{inner product}
\begin{equation*}
  i_{A} : \Omega^{k}(M) \to \Omega^{k-1}(M)
\end{equation*}
is defined by
\begin{equation*}
  (i_{A} \alpha)_{m}(B_{1}, \dotsc , B_{k-1}) := \alpha_{m}(A(m),B_{1}, \dotsc , B_{k-1}),
\end{equation*}
for any $B_{1}, \dotsc , B_{k-1} \in T_{m}M$, if $k \ge 1$ and $i_{A}\alpha = 0$ if $k=0$.

These two linear operators are related to each other and to the Lie derivative $\Lie_{A}$ by \emph{Cartan's magic Formula}
\begin{equation}\label{eq:Cartan-formula}
  \Lie_{A} = d \circ i_{A} + i_{A} \circ d.
\end{equation}

The following lemma is useful for our computations, in which $\vol_{\bq}$ is the Riemannian volume for Euclidean metric $\bq$ (see \autoref{sec:volume-forms}).

\begin{lem}\label{lem:alpha-exterior-derivative}
  Let $\ff_{1}, \ff_{2}, \ff_{3}$ be three \emph{vector valued functions}, defined on a manifold $M$ and let
  \begin{equation*}
    \alpha(A) = \vol_{\bq}(\ff_{1}, \ff_{2}, \Lie_{A} \ff_{3}), \qquad A\in T_{m} M.
  \end{equation*}
  Then
  \begin{equation}\label{eq:dalphaABLXp}
    \begin{aligned}
      d\alpha(A,B) & = \vol_{\bq}(\Lie_{A}\ff_{1}, \ff_{2} , \Lie_{B}\ff_{3}) - \vol_{\bq}(\Lie_{B}\ff_{1}, \ff_{2} , \Lie_{A}\ff_{3})
      \\
                   & \quad + \vol_{\bq}(\ff_{1}, \Lie_{A}\ff_{2} , \Lie_{B}\ff_{3}) - \vol_{\bq}(\ff_{1}, \Lie_{B}\ff_{2} , \Lie_{A}\ff_{3}).
    \end{aligned}
  \end{equation}
  with $A, B\in T_{m}M$.
\end{lem}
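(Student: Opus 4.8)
The plan is to apply the intrinsic (coordinate-free) formula for the exterior derivative of a $1$-form,
\[
d\alpha(A,B) = A(\alpha(B)) - B(\alpha(A)) - \alpha([A,B]),
\]
after extending the tangent vectors $A,B \in T_m M$ to vector fields near $m$ (the final answer being independent of the extension). The whole computation hinges on the observation, emphasized just before the statement, that each $\ff_{i}$ is a \emph{vector valued function}, so that $\Lie_{A}\ff_{i} = T\ff_{i}\cdot A$ depends only on the pointwise value of $A$, and that $\vol_{\bq}$ is a \emph{constant} alternating trilinear form on the fixed vector space $E$.

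First I would expand $A(\alpha(B))$. Since $\alpha(B) = \vol_{\bq}(\ff_{1}, \ff_{2}, \Lie_{B}\ff_{3})$ is a scalar function obtained by feeding three vector valued functions into the constant form $\vol_{\bq}$, the directional derivative along $A$ distributes by the Leibniz rule,
\[
A(\alpha(B)) = \vol_{\bq}(\Lie_{A}\ff_{1}, \ff_{2}, \Lie_{B}\ff_{3}) + \vol_{\bq}(\ff_{1}, \Lie_{A}\ff_{2}, \Lie_{B}\ff_{3}) + \vol_{\bq}(\ff_{1}, \ff_{2}, \Lie_{A}\Lie_{B}\ff_{3}),
\]
and symmetrically for $B(\alpha(A))$ with $A$ and $B$ interchanged.

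Next I would rewrite the bracket term. By definition $\alpha([A,B]) = \vol_{\bq}(\ff_{1}, \ff_{2}, \Lie_{[A,B]}\ff_{3})$, and identity \eqref{eq:Lie-derivative-property} gives $\Lie_{[A,B]}\ff_{3} = \Lie_{A}\Lie_{B}\ff_{3} - \Lie_{B}\Lie_{A}\ff_{3}$. Substituting the three expansions into the invariant formula, the six terms carrying a single Lie derivative reproduce exactly the claimed right-hand side, while the four terms carrying the iterated derivatives $\Lie_{A}\Lie_{B}\ff_{3}$ and $\Lie_{B}\Lie_{A}\ff_{3}$ cancel in pairs against the two terms coming from $\alpha([A,B])$.

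The computation is essentially mechanical, so there is no serious obstacle; the one point requiring care is the justification that the Leibniz rule applies term by term, \textit{i.e.}, that $\vol_{\bq}$ may be treated as constant and the $\ff_{i}$ as honest functions with values in $E$. This is precisely where the affine structure of $\espace$ (Remark~\ref{rem-vec-val-p}) and the identity $T\ff\cdot A = \Lie_{A}\ff$ enter; once these are granted, the cancellation of the second-order terms is automatic and guarantees in passing that the result is independent of the chosen extensions of $A$ and $B$.
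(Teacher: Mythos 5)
Your proof is correct and takes essentially the same route as the paper's: both apply the invariant formula $d\alpha(A,B) = A(\alpha(B)) - B(\alpha(A)) - \alpha([A,B])$, expand each term by the Leibniz rule (using that $\vol_{\bq}$ is a constant trilinear form on $E$ and the $\ff_{i}$ are vector valued functions), and cancel the iterated-derivative terms via the identity $\Lie_{[A,B]}\ff_{3} = \Lie_{A}\Lie_{B}\ff_{3} - \Lie_{B}\Lie_{A}\ff_{3}$, which is exactly the paper's~\eqref{eq:Lie-derivative-property}. Your added remarks on extending $A,B$ to vector fields and on why the result is extension-independent only make explicit what the paper leaves implicit.
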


\begin{proof}
  By Cartan's formula~\eqref{eq:Cartan-formula}, we have
  \begin{equation*}
    d\alpha(A,B) = \Lie_{A} \alpha(B) - \Lie_{B} \alpha(A) - \alpha([A,B]),
  \end{equation*}
  where $[A,B]:=\Lie_{A} B$ is the Lie Bracket. But
  \begin{equation*}
    \Lie_{A} \alpha(B) = \vol_{\bq}(\Lie_{A}\ff_{1}, \ff_{2}, \Lie_{B}\ff_{3}) + \vol_{\bq}(\ff_{1}, \Lie_{A}\ff_{2}, \Lie_{B}\ff_{3}) + \vol_{\bq}(\ff_{1}, \ff_{2}, \Lie_{A} \Lie_{B}\ff_{3}),
  \end{equation*}
  \begin{equation*}
    \Lie_{B} \alpha(A) = \vol_{\bq}(\Lie_{B}\ff_{1}, \ff_{2}, \Lie_{A}\ff_{3}) + \vol_{\bq}(\ff_{1}, \Lie_{B}\ff_{2}, \Lie_{A}\ff_{3}) + \vol_{\bq}(\ff_{1}, \ff_{2}, \Lie_{B} \Lie_{A}\ff_{3}),
  \end{equation*}
  and
  \begin{equation*}
    \alpha([A,B]) = \vol_{\bq}(\ff_{1}, \ff_{2}, \Lie_{[A,B]}\ff_{3}).
  \end{equation*}
  Now, by~\eqref{eq:Lie-derivative-property}, we have
  \begin{equation*}
    \Lie_{A} \Lie_{B}\ff_{3} - \Lie_{B} \Lie_{A}\ff_{3} = \Lie_{[A,B]}\ff_{3}
  \end{equation*}
  and we get thus~\eqref{eq:dalphaABLXp}.
\end{proof}

A differential form $\alpha \in \Omega^{k}(M)$ is said to be closed if $d\alpha = 0$. It is said to be exact if it can be written as $d\beta$ with $\beta \in \Omega^{k-1}(M)$. Since $d \circ d = 0$, \emph{any exact form is closed}. The following result, attributed to Poincaré, implies that \emph{any closed form defined on an open convex set of $\RR^{n}$ is exact.}

\begin{lem}[Poincaré lemma]\label{lem:Poincare}
  Let $U\subset \RR^{n}$ be a convex open set and $\alpha \in \Omega^{k}(U)$ ($1 \le k \le n$). If $d\alpha = 0$, then there exists $\beta \in \Omega^{k-1}(U)$ such that $\alpha = d\beta$.
\end{lem}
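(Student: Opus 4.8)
The plan is to prove the statement by exhibiting an explicit \emph{homotopy operator} $K \colon \Omega^{k}(U) \to \Omega^{k-1}(U)$ satisfying the algebraic identity
\begin{equation*}
  dK + Kd = \id \qquad \text{on } \Omega^{k}(U), \quad k \ge 1 .
\end{equation*}
Once such an operator is available, the conclusion is immediate: if $d\alpha = 0$, then $\alpha = dK\alpha + Kd\alpha = d(K\alpha)$, so that $\beta := K\alpha$ is the sought primitive. This $K$ is precisely the finite-dimensional prototype of the \emph{Poincaré integrator} used, in infinite dimension, in~\autoref{subsec:Poincare-integrator}.

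First I would use convexity to fix a base point $x_{0} \in U$ and introduce the radial vector field $\bxi(x) := x - x_{0}$, whose flow is
\begin{equation*}
  \phi^{t}(x) = e^{t} x + (1 - e^{t}) x_{0}, \qquad t \in (-\infty, 0].
\end{equation*}
For $t \le 0$ one has $e^{t} \in (0,1]$, so $\phi^{t}(x)$ lies on the segment joining $x_{0}$ to $x$ and therefore stays in $U$ by convexity; moreover $\phi^{0} = \id$ and $\phi^{t}(x) \to x_{0}$ as $t \to -\infty$. I then define
\begin{equation*}
  K\alpha := \int_{-\infty}^{0} (\phi^{t})^{*} (i_{\bxi}\alpha) \, dt .
\end{equation*}
Because $T\phi^{t} = e^{t}\,\id$ and $\bxi(\phi^{t}(x)) = e^{t}\bxi(x)$, the pullback of the $(k-1)$-form $i_{\bxi}\alpha$ carries an overall factor $e^{kt}$; since $k \ge 1$, this factor is integrable on $(-\infty,0]$ and the improper integral converges, defining a smooth $(k-1)$-form.

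To establish the homotopy identity, I would differentiate the pullback along the flow, using the property $\frac{d}{dt}(\phi^{t})^{*}\omega = (\phi^{t})^{*}\Lie_{\bxi}\omega$ together with Cartan's magic formula~\eqref{eq:Cartan-formula}, $\Lie_{\bxi} = d\,i_{\bxi} + i_{\bxi}\,d$. Since the exterior derivative commutes with pullback, this yields
\begin{equation*}
  \frac{d}{dt}\big[(\phi^{t})^{*}\alpha\big] = (\phi^{t})^{*}\Lie_{\bxi}\alpha = d\big[(\phi^{t})^{*} i_{\bxi}\alpha\big] + (\phi^{t})^{*} i_{\bxi}\, d\alpha .
\end{equation*}
Integrating in $t$ from $-\infty$ to $0$ and applying the fundamental theorem of calculus, the left-hand side becomes $(\phi^{0})^{*}\alpha - \lim_{t\to-\infty}(\phi^{t})^{*}\alpha = \alpha - 0$, the limit vanishing because $\phi^{t}$ tends to the constant map $x \mapsto x_{0}$, whose pullback annihilates any form of degree $k \ge 1$. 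On the right-hand side, pulling $d$ out of the first integral produces $d(K\alpha)$ and the second integral is $K(d\alpha)$, whence $\alpha = dK\alpha + Kd\alpha$.

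The main points requiring care are exactly the two analytic steps hidden in the last paragraph: the convergence of the improper integral as $t \to -\infty$ (guaranteed by the $e^{kt}$ scaling, which is what forces $k \ge 1$), and the interchange of the exterior derivative $d$ with the $t$-integral defining $K$. Both are routine once the $e^{kt}$ bound on the integrand and its $x$-derivatives is made explicit, so the essential content is the flow computation $\frac{d}{dt}(\phi^{t})^{*} = (\phi^{t})^{*}\Lie_{\bxi}$ combined with Cartan's formula. With these in hand, setting $\beta = K\alpha$ completes the proof.
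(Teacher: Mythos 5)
Your proof is correct and follows essentially the same route as the paper's: the same Poincaré integrator $K\alpha = \int_{-\infty}^{0} (\phi^{t})^{*} i_{\bxi}\alpha \, dt$ built from the flow of the radial field on a convex set, combined with Cartan's magic formula and the commutation of $d$ with pullback. The only (harmless) presentational differences are that you establish the full homotopy identity $dK + Kd = \id$ before specializing to $d\alpha = 0$, whereas the paper inserts $d\alpha = 0$ directly into the computation of $dK\alpha$, and that you make explicit the $e^{kt}$ convergence estimate and the vanishing of $\lim_{t\to-\infty}(\phi^{t})^{*}\alpha$, which the paper leaves implicit.
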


The important point is that the proof of this lemma is constructive. A primitive $\beta$ of a closed differential form $\alpha$ defined on $U$ is explicitly constructed. Indeed, let $\xx_{0} \in U$ and let $\phi^{t}(\xx) = e^{t}\xx + (1-e^{t})\xx_{0}$ be the flow of the radial vector field $X(\xx) = \xx - \xx_{0}$ defined on $U$. Define the linear operator $K: \Omega^{k}(U) \to \Omega^{k-1}(U)$ by
\begin{equation}\label{eq:Poincare-formula}
  K\alpha = \int_{-\infty}^{0} (\phi^{t})^{*} i_{X}\alpha \, dt.
\end{equation}
Then, using Cartan's formula~\eqref{eq:Cartan-formula} and the fact that pullbacks and exterior derivative commute, we get
\begin{equation*}
  d(\phi^{t})^{*} i_{X}\alpha = (\phi^{t})^{*} di_{X}\alpha = (\phi^{t})^{*} (\Lie_{X}\alpha - i_{X}d\alpha) = (\phi^{t})^{*}\Lie_{X}\alpha,
\end{equation*}
and thus
\begin{equation*}
  dK\alpha = \int_{-\infty}^{0} d(\phi^{t})^{*} i_{X}\alpha \, dt = \int_{-\infty}^{0} (\phi^{t})^{*} \Lie_{X}\alpha \, dt = \int_{-\infty}^{0} \frac{d}{dt}((\phi^{t})^{*} \alpha ) \, dt = \alpha .
\end{equation*}
Therefore, $\beta := K\alpha$ is a primitive of $\alpha$.

\begin{rem}
  When the form $\alpha$ is not closed we have
  \begin{equation}\label{eq:Poincare-formula-full}
    dK\alpha = \alpha - \int_{-\infty}^{0} (\phi^{t})^{*} i_{X}d\alpha \, dt .
  \end{equation}
  This last integral can be used to formulate non-holonomic constraints for a primitive to exist, even when $\alpha$ is not closed.
\end{rem}

\section{Volume forms and area elements}
\label{sec:volume-forms}

A \emph{volume form} on a manifold $M$ of dimension $n$ is a differential form of degree $n$ that does not vanish at any point. A manifold $M$ which has a volume form is necessarily \emph{orientable}~\cite[Section 1.G]{GHL2004}. On an orientable Riemannian manifold $(M,g)$, there is a unique volume form, noted $\vol_{g}$, and called the \emph{Riemannian volume form} such that
\begin{equation*}
  (\vol_{g})_{m}(\ee_{1}, \dotsc , \ee_{n}) = 1,
\end{equation*}
for any direct orthonormal basis of $T_{m}M$ and any point $m \in M$. In a local coordinate system $(x^{i})$, this volume form can be written as
\begin{equation}\label{eq:riemannian-volume-local-expression}
  \vol_{g} = \sqrt{\det (g_{ij})} \, dx^{1} \wedge \dotsb \wedge dx^{n}.
\end{equation}

Let us now consider an orientable manifold with boundary $M$. Let $\omega$ be a volume form on $M$ and $x \in \partial M$. In a local chart in the vicinity of $x$, we can consider the form $i_{\nn}\omega$ where $\nn$ is a vector with a $x_{1}$ component \emph{strictly positive}. The class of this volume form defines the induced orientation on $\partial M$, known as the convention of the \emph{outer normal}. For example, consider the prototype manifold $]-\infty,0] \times \RR^{n-1}$ with the orientation defined by the volume form  $dx^{1} \wedge \dotsb \wedge dx^{n}$. Then the orientation induced on the boundary $\RR^{n-1}$ of $]-\infty,0] \times \RR^{n-1}$ is represented by the volume form $dx^{2} \wedge \dotsb \wedge dx^{n}$.

Let $(M,g)$ be an oriented $3$-dimensional Riemannian manifold with boundary $\partial M$. Then, the Riemannian metric $g$ on $M$ induces by restriction a Riemannian metric on $\partial M$. Let $\vol_{g}$ be the Riemannian volume form on $M$ and $\nn$ be the outer unit normal on the boundary $\partial M$. Then, one can show that the Riemannian volume on the $2$-dimensional manifold $\partial M$ is the $2$-form
\begin{equation*}
  da := i_{\nn}\vol_{g},
\end{equation*}
which is called the \emph{area element} of $\partial M$.

\begin{rem}\label{rem:nds}
  Let $X$ be a vector field defined on $\partial M$ (not necessarily tangent to $\partial M$). Then, as a $2$-form on $\partial M$, we have the following identities
  \begin{equation}\label{eq:nds}
    i_{X}\vol_{g} = \langle X, \nn \rangle i_{\nn}\vol_{g} = \langle X, \nn \rangle da.
  \end{equation}
\end{rem}

The following theorem happens to be extremely useful when deriving boundary conditions on the body.

\begin{thm}\label{thm:generalized-Nanson}
  Let $\pp$, $\pp_{0}$ be two orientation-preserving embeddings from $\body$ to $\espace$. Let $\ww$ be a vector field defined on $\partial\Omega_{\pp}$ (not necessarily tangent to $\partial\Omega_{\pp}$). Then
  \begin{equation}\label{eq:generalized-Nanson}
    \pp^{*}(\rho (\ww \cdot \nn) da) = \rho_{\bgamma_{0}} (\ww\circ\pp \cdot \bq^{-1}\bF^{-\star}\bgamma_{0} \bN_{0}) da_{\bgamma_{0}},
  \end{equation}
  where $\bF^{\star}: T^{\star}\espace \to T^{\star}\body $ is the (metric free) transpose
  of $\bF=T\pp: T\body \to T\espace $, $\rho_{\bgamma_{0}} = \pp_{0}^{*}\, \rho_{0}=\rho_{0}\circ\pp_{0}$ and $da_{\bgamma_{0}}$ is the area density on $\partial\body$ relative to the metric $\bgamma_{0} = {\pp_{0}}^{*}\bq$.
\end{thm}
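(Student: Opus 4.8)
The plan is to reduce the whole identity to two applications of the formula~\eqref{eq:nds} from Remark~\ref{rem:nds}, bridged by a single pullback and by the conservation of mass. The starting observation is that, up to the scalar factor $\rho$, the integrand on the left is an interior product: applying~\eqref{eq:nds} on $\partial\Omega_{\pp}$ with the $\bq$-outer unit normal $\nn$ gives $\rho\,(\ww\cdot\nn)\,da = i_{\ww}(\rho\,\vol_{\bq})$ as $2$-forms on $\partial\Omega_{\pp}$. Since $\pp$ restricts to an orientation-preserving diffeomorphism $\partial\body \to \partial\Omega_{\pp}$ that fits into the commuting square formed by the two boundary inclusions, pulling back and commuting the restriction-to-the-boundary with $\pp^{*}$ shows that the left-hand side of~\eqref{eq:generalized-Nanson} equals $\pp^{*}\big(i_{\ww}(\rho\,\vol_{\bq})\big)$.

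Next I would use the naturality of the interior product under pullback, namely $\pp^{*}(i_{\ww}\omega) = i_{\pp^{*}\ww}\,\pp^{*}\omega$ with $\pp^{*}\ww = \bF^{-1}(\ww\circ\pp)$, to transport the computation onto $\body$. The step that makes the \emph{reference} metric appear is the mass conservation relation of~\autoref{subsec:mass-measure}: $\pp^{*}(\rho\,\vol_{\bq}) = \mu = \rho_{\bgamma_{0}}\vol_{\bgamma_{0}}$. This is what lets me avoid the mismatched metric $\bgamma = \pp^{*}\bq$ and instead obtain $\rho_{\bgamma_{0}}\, i_{\bF^{-1}(\ww\circ\pp)}\vol_{\bgamma_{0}}$, in which the only volume form is the Riemannian volume of $(\body,\bgamma_{0})$. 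I then apply~\eqref{eq:nds} a second time, now on $\partial\body$ with the $\bgamma_{0}$-outer unit normal $\bN_{0}$ and $da_{\bgamma_{0}} = i_{\bN_{0}}\vol_{\bgamma_{0}}$, yielding $\rho_{\bgamma_{0}}\,\bgamma_{0}\big(\bF^{-1}(\ww\circ\pp),\bN_{0}\big)\,da_{\bgamma_{0}}$.

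It then remains to rewrite the $\bgamma_{0}$-inner product on $\body$ as the Euclidean inner product on $\espace$, which is pure linear algebra. By symmetry of $\bgamma_{0}$ and the defining adjunction of the metric-free transpose, $\langle \bF^{-\star}\zeta, \vv\rangle_{\espace} = \langle \zeta, \bF^{-1}\vv\rangle_{\body}$, one gets $\bgamma_{0}\big(\bF^{-1}(\ww\circ\pp),\bN_{0}\big) = \langle \bF^{-\star}\bgamma_{0}\bN_{0},\, \ww\circ\pp\rangle_{\espace}$; inserting $\bq\,\bq^{-1}=\Id$ converts this duality pairing into the $\bq$-inner product $\ww\circ\pp\cdot\bq^{-1}\bF^{-\star}\bgamma_{0}\bN_{0}$, which is exactly the right-hand side of~\eqref{eq:generalized-Nanson}.

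I expect the only real subtlety to be bookkeeping rather than substance. Because $\ww$ is prescribed only on $\partial\Omega_{\pp}$, every form in play lives on a boundary, and one must keep track of the fact that restriction to the boundary commutes with the pullback by $\pp$, and that the orientation-preserving hypothesis guarantees the consistent signs making $\mu = \rho_{\bgamma_{0}}\vol_{\bgamma_{0}}$ with $\rho_{\bgamma_{0}}>0$. Once this square of inclusions is set up carefully, the two appeals to~\eqref{eq:nds} together with the single use of mass conservation carry the entire argument, and the transpose manipulation at the end is routine.
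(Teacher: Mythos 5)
Your proof is correct and follows essentially the same route as the paper's: two applications of~\eqref{eq:nds} (on $\partial\Omega_{\pp}$ and on $\partial\body$), mass conservation to trade $\pp^{*}(\rho\,\vol_{\bq})$ for $\rho_{\bgamma_{0}}\vol_{\bgamma_{0}}$, and the same duality-bracket manipulation with $\bF^{-\star}$ at the end. The only cosmetic difference is that the paper first factors out $\rho$ as $\rho_{\bgamma}$ and uses the isometry $\pp^{*}\vol_{\bq} = \vol_{\bgamma}$ before invoking $\rho_{\bgamma}\vol_{\bgamma} = \mu = \rho_{\bgamma_{0}}\vol_{\bgamma_{0}}$, whereas you pull back $\rho\,\vol_{\bq}$ to $\mu$ in one step --- an entirely equivalent bookkeeping choice.
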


\begin{proof}
  Using~\eqref{eq:nds}, and the fact that $\pp^{*}\rho = \rho_{\bgamma}$, we have
  \begin{equation*}
    \pp^{*}(\rho (\ww \cdot \nn) da) = \rho_{\bgamma} \pp^{*}( (\ww \cdot \nn) da) = \rho_{\bgamma} \pp^{*}( i_{\ww} \vol_{\bq}) = \rho_{\bgamma} i_{\pp^{*}\ww} \vol_{\bgamma},
  \end{equation*}
  because $\pp$ is an orientation-preserving Riemannian isometry. Now, mass conservation leads (see~\autoref{sec:kinematics}) to
  \begin{equation*}
    \mu = \rho_{\bgamma}\vol_{\bgamma} = \rho_{\bgamma_{0}}\vol_{\bgamma_{0}}.
  \end{equation*}
  Therefore, we deduce that
  \begin{equation*}
    \pp^{*}(\rho (\ww \cdot \nn) da) = \rho_{\bgamma_{0}} i_{\pp^{*}\ww} \vol_{\bgamma_{0}} = \rho_{\bgamma_{0}} \langle \pp^{*}\ww, \bN_{0} \rangle_{\gamma_{0}} da_{\bgamma_{0}}.
  \end{equation*}
  But
  \begin{equation*}
    \langle \pp^{*}\ww, \bN_{0} \rangle_{\gamma_{0}} = \left( \bgamma_{0}\bN_{0}, \bF^{-1} \ww \circ \pp \right) = \left( \bF^{-\star}\bgamma_{0}\bN_{0}, \ww \circ \pp \right) = (\bq^{-1}\bF^{-\star}\bgamma_{0}\bN_{0}) \cdot \ww \circ \pp,
  \end{equation*}
  where $(\cdot,\cdot)$ is the duality bracket. This achieves the proof.
\end{proof}

\begin{rem}\label{rem:Nanson}
  If we identify the body $\body$ with a reference configuration $\Omega_{0}$, and set thus $\pp_{0}\equiv \id$, $\bgamma_{0}\equiv \bq$, $\pp\equiv \varphi$ and $\bF\equiv\bF_{\varphi}$, then, \eqref{eq:generalized-Nanson} can be written as
  \begin{equation*}
    \varphi^{*}\big(\rho(\ww \cdot \nn)\, da \big) = \rho_{0} (\ww\circ\varphi \cdot \bq^{-1}\bF_{\varphi}^{-\star}\bq \nn_{0}) da_{0} = \rho_{0} (\ww\circ\varphi \cdot \bF_{\varphi}^{-t} \nn_{0}) da_{0},
  \end{equation*}
  which we can recast as
  \begin{equation*}
    \varphi^{*}\big((\ww \cdot \nn)\, da \big) = \frac{\rho_{0}}{\varphi^{*}\rho} (\ww\circ\varphi \cdot \bF_{\varphi}^{-t} \nn_{0}) da_{0} = J_{\varphi} (\ww\circ\varphi \cdot \bF_{\varphi}^{-t} \nn_{0}) da_{0},
  \end{equation*}
  because $\rho_{0}/\varphi^{*}\rho = J_{\varphi}$ by~\eqref{eq:mass-conservation}. This equality is known in continuum mechanics as \emph{Nanson's formula}~\cite{Nan1878} and often written in condensed form as
  \begin{equation}\label{eq:Nanson-formula}
    \nn \, da = J_{\varphi} \bF_{\varphi}^{-t} \nn_{0} \, da_{0}.
  \end{equation}
\end{rem}

\section{Fréchet topology on the space of embeddings}
\label{sec:Frechet-topology}

In this paper, we are interested into the set $\Emb(\body,\espace)$ of smooth embeddings, which is a subset of $\Cinf(\body,\espace)$. Once an origin of $\espace$ has been chosen, the affine space $\espace$ inherits the structure of a vector space which is isomorphic to its translation space $E$. Thus, $\Cinf(\body,\espace)$ can be considered as the vector space of smooth vector valued functions with values in the vector space $\espace$. This vector space $\Cinf(\body,\espace)$ is not a Banach space; its topology is not defined by a norm but by a countable family of \emph{semi-norms}. These semi-norms can be described either by choosing a Riemannian metric on $\body$ or by choosing an embedding $\pp_{0}$ of $\body$ into $\espace$. In both cases, one can prove anyway that \emph{the defined topology does not depend of the particular choice of the metric or of the embedding}, used to build this topology.

Here, we chose to describe this topology using an embedding $\pp_{0}$. This means that we describe first a topology on $\Cinf(\Omega_{0},\espace)$, where $\Omega_{0} = \pp_{0}(\body)$, and then a topology on $\Cinf(\body,\espace)$, using the invertible linear mapping
\begin{equation*}
  L_{\pp_{0}}: \Cinf(\body,\espace) \to \Cinf(\Omega_{0},\espace), \qquad \pp \mapsto \varphi := \pp \circ \pp_{0}^{-1}.
\end{equation*}
Thus, a subset $U \subset \Cinf(\body,\espace)$ is open if and only if $L_{\pp_{0}}(U)$ is open and we define the semi-norm of $\pp \in \Cinf(\body,\espace)$ as the semi-norm of $\pp \circ \pp_{0}^{-1} \in \Cinf(\Omega_{0},\espace)$. The topology on $\Cinf(\Omega_{0},\espace)$ is defined by the family of $C^{k}$-semi-norms
\begin{equation*}
  \norm{f}_{k} := \max_{k_{1} + k_{2} + k_{3} = k} \left\{\sup_{x \in \Omega_{0}} \frac{\partial^{k}f}{\partial_{x}^{k_{1}}\partial_{y}^{k_{2}}\partial_{z}^{k_{3}}}(x)\right\}.
\end{equation*}
For this topology, the \emph{semi-balls}
\begin{equation*}
  B_{k}(f_{0},r) := \set{f \in \Cinf(\Omega_{0},\espace); \norm{f-f_{0}}_{k} < r}
\end{equation*}
are always open sets. The space $\Cinf(\body,\espace)$ is an example of a so-called \emph{Fréchet space}~\cite{Ham1982,Rud1991}. These spaces are not nice from the point of view of Analysis since essential results such as \emph{Inverse mapping theorem} or \emph{local existence of solutions} for ordinary differential equations are no longer true without hard-to-check additional hypotheses~\cite{Ham1982}. There are however other choices of topological and diffeological structures on this space and the interested reader may consult~\cite{KM1997}. The set $\Emb(\body,\espace)$ of smooth embeddings from $\body$ to $\espace$ can be shown to be an open set of the vector space $\Cinf(\body,\espace)$ as a corollary of the following lemma (see also~\cite{Hir1976,Mil1984,KOS2017,SE2020}).

\begin{lem}\label{lem:convex-neighborhood}
  Let $\pp_{0} \in \Emb(\body,\espace)$. The neighborhood of $\pp_{0}$, defined by
  \begin{equation*}
    \mathcal{U}_{\pp_{0}} := \set{\pp \in \Cinf(\body,\espace); \; \norm{\pp - \pp_{0}}_{1} < 1}
  \end{equation*}
  is an \emph{open convex set} of $\Cinf(\body,\espace)$, contained in $\Emb(\body,\espace)$.
\end{lem}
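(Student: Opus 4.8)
The plan is to transport the whole question to the reference configuration $\Omega_{0} = \pp_{0}(\body)$ through the linear isomorphism $L_{\pp_{0}}\colon \pp \mapsto \varphi := \pp \circ \pp_{0}^{-1}$ used to define the topology, and then to check the three assertions separately. Under $L_{\pp_{0}}$ the embedding $\pp_{0}$ becomes the inclusion $\id\colon \Omega_{0} \hookrightarrow \espace$, the tangent map $\bF_{0} = T\pp_{0}$ becomes the identity $\Id$ of $E$, and the defining condition $\norm{\pp - \pp_{0}}_{1} < 1$ reads $\sup_{x \in \Omega_{0}} \norm{D\varphi(x) - \Id} < 1$, that is $\norm{\bF - \bF_{0}}_{0} < 1$ in the notation of~\eqref{eq:Up0}. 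Since $L_{\pp_{0}}$ is a linear homeomorphism, openness and convexity are preserved, so I may argue in the $\varphi$-picture.

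Openness and convexity are then immediate. The set $\mathcal{U}_{\pp_{0}}$ is by construction a $C^{1}$-semiball, and such semiballs are open for the Fréchet topology. For convexity, if $\norm{\pp_{1} - \pp_{0}}_{1} < 1$, $\norm{\pp_{2} - \pp_{0}}_{1} < 1$ and $\lambda \in [0,1]$, then positive homogeneity and the triangle inequality of the seminorm give $\norm{(1-\lambda)\pp_{1} + \lambda \pp_{2} - \pp_{0}}_{1} \le (1-\lambda)\norm{\pp_{1} - \pp_{0}}_{1} + \lambda \norm{\pp_{2} - \pp_{0}}_{1} < 1$.

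The real content is the inclusion $\mathcal{U}_{\pp_{0}} \subset \Emb(\body,\espace)$. I would first settle the immersion and the orientation: writing $D\varphi = \Id + (D\varphi - \Id)$ with $\norm{D\varphi - \Id} < 1$, the Neumann series shows $D\varphi$ is invertible at every point, and since $t \mapsto \det(\Id + t(D\varphi - \Id))$ never vanishes on $[0,1]$ and equals $1$ at $t = 0$, one gets $\det D\varphi > 0$. Hence $\pp$ is an orientation-preserving immersion. As $\body$ is compact and $\espace$ is Hausdorff, an injective immersion is automatically a homeomorphism onto its image, so it only remains to prove that $\varphi$ (equivalently $\pp$) is injective.

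Injectivity is the step I expect to be the main obstacle. When $\Omega_{0}$ is convex it is immediate: for $x \neq y$ the segment $[x,y]$ lies in $\Omega_{0}$, and with $\psi := \varphi - \id$, $\abs{\varphi(x) - \varphi(y)} \ge \abs{x - y} - \abs{\int_{0}^{1} D\psi(y + t(x-y))(x-y)\, dt} \ge (1 - \sup \norm{D\psi})\abs{x-y} > 0$. For a general, possibly non-convex $\Omega_{0}$ this Euclidean estimate breaks down, since the first-order bound only controls the \emph{intrinsic} Lipschitz constant of $\psi$ along paths in $\Omega_{0}$, and two points close in $\espace$ but far apart inside $\Omega_{0}$ could in principle be identified. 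In that generality the injectivity has to be secured from the stability of embeddings under $C^{1}$-small perturbations on a compact manifold (the content of the results of Hirsch and of the references cited just after the statement), which is the genuinely topological input behind the lemma; the reduction to this fact, or to the convex case, is where I would expect the effort to concentrate, the remaining verifications being routine.
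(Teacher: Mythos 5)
Your proposal tracks the paper's proof step for step except at the injectivity stage: the paper also reduces to $\varphi = \pp\circ\pp_{0}^{-1}$ via the linearity of $L_{\pp_{0}}$ to get openness and convexity, proves the immersion property by the same norm estimate $\norm{\bF_{\varphi}-\Id}_{0}<1$ (your Neumann-series refinement additionally yields $\det \bF_{\varphi}>0$, i.e.\ orientation preservation --- a point the paper's proof silently omits even though $\Emb(\body,\espace)$ is defined to consist of \emph{orientation-preserving} embeddings), and concludes with the same fact that an injective immersion of a compact manifold is an embedding.

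The step you flag as the main obstacle is precisely where the paper's own argument is incomplete. The paper settles injectivity by the mean value theorem: from $\varphi(\xx)=\varphi(\yy)$ it deduces $\norm{\xx-\yy}\le\norm{\varphi-\id}_{1}\norm{\xx-\yy}$. As you observe, this inequality requires the segment $[\xx,\yy]$ to lie in $\Omega_{0}$ --- it is exactly your convex case --- and the paper applies it without comment to a general $\Omega_{0}=\pp_{0}(\body)$. Your suspicion is in fact justified more strongly than you state: for a thin solid horseshoe whose two ends are at extrinsic distance $\delta\ll 1$ but intrinsic distance of order $L$, the map $\varphi=\id+\psi$, with $\psi$ sliding one end onto the other along the arclength coordinate, satisfies $\norm{\varphi-\id}_{1}\sim\delta/L$, arbitrarily small, yet fails to be injective. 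Hence for such a body the radius-$1$ semiball is \emph{not} contained in $\Emb(\body,\espace)$, so your fallback to Hirsch-type $C^{1}$-stability cannot rescue the lemma as literally stated: stability produces only some $\pp_{0}$-dependent neighborhood, not the uniform radius $1$. What your argument (and Hirsch) does prove is the corollary the paper actually uses, namely that $\Emb(\body,\espace)$ is open and that sufficiently small convex semiballs around $\pp_{0}$ lie inside it; the uniform radius $1$ --- and with it the paper's mean-value step --- is correct only when $\Omega_{0}$ is convex, or more generally when intrinsic and extrinsic distances in $\Omega_{0}$ coincide. In short: your diagnosis of the gap is exactly right, but you should be aware that it is a gap in the paper's proof (indeed in the literal statement), not merely in yours, and that no purely topological input closes it at radius $1$.
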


\begin{proof}
  Note first that
  \begin{equation*}
    L_{\pp_{0}}(\mathcal{U}_{\pp_{0}}) := \set{\varphi \in \Cinf(\Omega_{0},\espace); \; \norm{\varphi - \id}_{1} < 1}
  \end{equation*}
  is an open convex subset of $\Cinf(\Omega_{0},\espace)$. By the very definition of the topology on $\Cinf(\body,\espace)$ and the fact that $L_{\pp_{0}}$ is linear, we deduce that $\mathcal{U}_{\pp_{0}}$ is an open convex subset of $\Cinf(\body,\espace)$. It remains to show that each $\pp \in \mathcal{U}_{\pp_{0}}$ is an embedding. We will show that each $\varphi \in L_{\pp_{0}}(\mathcal{U}_{\pp_{0}})$ is an embedding and the conclusion follows straightforwardly. We will prove first that each vector valued function $\varphi \in L_{\pp_{0}}(\mathcal{U}_{\pp_{0}})$ is injective. To do so, let $\varphi \in \mathcal{U}_{\pp_{0}}$ and suppose that $\pp(\xx)=\pp(\yy)$, we get then, thanks to the mean value theorem, that
  \begin{equation*}
    \norm{\xx-\yy} = \norm{\xx-\varphi(\xx)+\varphi(\yy)-\yy} = \norm{(\id-\varphi)(\xx)-(\id-\varphi)(\yy)} \le \norm{\varphi - \id}_{1} \norm{\xx-\yy}
  \end{equation*}
  and thus that $\xx=\yy$, because $\norm{\varphi - \id}_{1} < 1$. Next, we will show that $\varphi$ is an immersion, \textit{i.e}, that $\bF_{\varphi} = T\varphi$ is injective. Thus, assume that $\bF_{\varphi}.\delta\xx = \bF_{\varphi}.\delta\yy$, then,
  \begin{equation*}
    \norm{\delta\xx-\delta\yy} = \norm{(\Id-\bF_{\varphi})\delta\xx -(\Id-\bF_{\varphi})\delta\yy} \le \norm{\bF_{\varphi}-\Id}_{0}\norm{\delta\xx-\delta\yy} = \norm{\varphi - \id}_{1} \norm{\delta\xx-\delta\yy},
  \end{equation*}
  and therefore $\delta\xx = \delta\yy$, because $\norm{\varphi - \id}_{1} < 1$. To finish, we observe that an injective immersion from a compact manifold into space is always an embedding. Since it is assumed in solid mechanics that $\body$ (and thus $\Omega_{0}$) is compact, this achieves the proof.
\end{proof}

\begin{rem}
  The set $\Met(\body)$ of all Riemannian metrics on $\body$ is a subset of the vector space of smooth sections $\Gamma(S^{2}T^{\star}\body)$ of the vector bundle $S^{2}T^{\star}\body$. A Fréchet topology on $\Gamma(S^{2}T^{\star}\body)$ can be constructed, using the same arguments used to build one on $\Cinf(\body,\espace)$. The set $\Met(\body)$ can be shown to be an open convex set of $\Gamma(S^{2}T^{\star}\body)$.
\end{rem}

\begin{rem}
  In this paper, we have chosen to work in the smooth category. It is however possible to work in the category $C^{p}$, where $p \ge 1$, as it was done for instance in~\cite{Seg1986}.
\end{rem}

\section{The first Piola--Kirchhoff tensor on the body}
\label{sec:PK1}

When so-called dead loads are involved, \textit{i.e.}, loads per unit area of direction and intensity independent of the deformation of the medium, the associated Neumann condition corresponds to the prescribed stress traction vector $\vec t_{0}$ on the boundary part $\Sigma_{0}^{(DL)}\subset \partial \Omega_{0}$ (with $\Omega_{0}=\pp_{0}(\body)$ the reference configuration) but with values in space $\espace$,
\begin{equation}\label{eq:CLDL}
  \left.\widehat \bP \nn_{0}\right|_{\Sigma_{0}^{(DL)}} = \vec t_{0},
\end{equation}
where $\widehat \bP$ is the (mixed, two point \cite{MH1994,Hau2002}) first Piola--Kirchhoff tensor,
\begin{equation*}
  \widehat \bP := \rho_{0} \bF_{\varphi}\bS\, \bq =\rho_{0} (\btau \circ \varphi) \bF_\varphi^{-\star}\bq,
\end{equation*}
such that
\begin{equation*}
  \int_{\partial \Omega_{\pp}} (\widehat{\bsigma}\nn \cdot \ww) \, da =
  \int_{\partial \Omega_{0}} (\widehat{\bP}\nn_{0} \cdot \delta \varphi ) \, da_{0},
\end{equation*}
for any virtual velocity $\ww =\delta \varphi \circ \varphi^{-1}$.
Recall that $\bS=\varphi^{*} \btau =\varphi^{*}(\bsigma/\rho)$ is the second Piola--Kirchhoff stress tensor (defined on $\Omega_{0}$). We have then the following result.

\begin{lem}\label{lem:PK-body}
  We have
  \begin{equation*}
    \int_{\partial \Omega_{\pp}} (\widehat{\bsigma}\nn \cdot \ww) \, da =
    \int_{\partial \body} (\widehat{\mpi}\bN_{0} \cdot \delta \pp ) \, da_{\bgamma_{0}},
  \end{equation*}
  for any virtual velocity $\ww=\delta \pp \circ \pp^{-1}$, where
  \begin{equation*}
    \widehat{\mpi} := \rho_{\bgamma_{0}} (\btau \circ \pp) \bF^{-\star}\bq = \rho_{\bgamma_{0}} \bF \btheta \bgamma_{0},
  \end{equation*}
  is the mixed stress tensor defined on $\body$ but with values on Euclidean space $\espace$, with $\bgamma_{0}=\pp_{0}^*\, \bq$, $\rho_{\bgamma_{0}}= \pp_{0}^{*}\, \rho_{0}=\rho_{0}\circ \pp_{0}$, $\btau := \bsigma / \rho$ and $\btheta = \pp^{*}\btau$.
\end{lem}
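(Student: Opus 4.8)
The plan is to reduce the left-hand side to a scalar flux integral of exactly the type handled by the generalized Nanson theorem (Theorem~\ref{thm:generalized-Nanson}), so that all the analytic content is already available. The first move exploits the fact that the mixed Cauchy stress $\widehat{\bsigma} = \bsigma\bq$ is self-adjoint with respect to $\bq$: since $\bsigma$ is a symmetric contravariant tensor, $\bq\bsigma\bq$ is a symmetric bilinear form, so $\bq(\widehat{\bsigma}\nn,\ww) = \bq(\nn,\widehat{\bsigma}\ww)$ for all vectors. Writing $\bsigma = \rho\,\btau$ and $\widehat{\btau} := \btau\bq$, so that $\widehat{\bsigma} = \rho\,\widehat{\btau}$, I transfer the stress operator off the normal and onto the velocity, $\widehat{\bsigma}\nn\cdot\ww = \nn\cdot\widehat{\bsigma}\ww = \rho\,(\nn\cdot\widehat{\btau}\ww)$. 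Setting $\vec{Y} := \widehat{\btau}\ww$, a vector field on $\partial\Omega_{\pp}$, the integrand becomes $\rho\,(\vec{Y}\cdot\nn)\,da$, precisely the quantity on the left of~\eqref{eq:generalized-Nanson}.

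Next I would apply Theorem~\ref{thm:generalized-Nanson} verbatim, with $\vec{Y}$ playing the role of the boundary vector field, to obtain
\begin{equation*}
  \int_{\partial\Omega_{\pp}}(\widehat{\bsigma}\nn\cdot\ww)\,da
  = \int_{\partial\body}\pp^{*}\big(\rho\,(\vec{Y}\cdot\nn)\,da\big)
  = \int_{\partial\body}\rho_{\bgamma_{0}}\,\big(\vec{Y}\circ\pp\cdot\bq^{-1}\bF^{-\star}\bgamma_{0}\bN_{0}\big)\,da_{\bgamma_{0}}.
\end{equation*}
The pullback of $i_{\vec{Y}}\vol_{\bq}$ (using that $\pp$ is an orientation-preserving isometry onto $(\Omega_{\pp},\bq)$) and the conversion of $\rho_{\bgamma}\vol_{\bgamma}$ into $\rho_{\bgamma_{0}}\vol_{\bgamma_{0}}$ by mass conservation are already packaged inside that theorem, so no new computation is needed at this stage.

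It then remains to recognize the right-hand integrand as $\widehat{\mpi}\bN_{0}\cdot\delta\pp$. Since $\delta\pp = \ww\circ\pp$, I have $\vec{Y}\circ\pp = (\widehat{\btau}\circ\pp)\,\delta\pp$. Applying the $\bq$-self-adjointness of $\widehat{\btau}$ a second time to move it from $\bq^{-1}\bF^{-\star}\bgamma_{0}\bN_{0}$ onto $\delta\pp$, and simplifying $(\widehat{\btau}\circ\pp)\,\bq^{-1} = (\btau\circ\pp)\,\bq\,\bq^{-1} = \btau\circ\pp$, the integrand becomes $\rho_{\bgamma_{0}}\,\delta\pp\cdot\big((\btau\circ\pp)\bF^{-\star}\bgamma_{0}\bN_{0}\big)$. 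This identifies $\widehat{\mpi} = \rho_{\bgamma_{0}}\,(\btau\circ\pp)\bF^{-\star}\bgamma_{0}$, which is the form entering the lemma. To reach the equivalent expression $\rho_{\bgamma_{0}}\,\bF\btheta\bgamma_{0}$, I would use $\btheta = \pp^{*}\btau = \bF^{-1}(\btau\circ\pp)\bF^{-\star}$ from Example~\ref{exam:pullbacks-order-two}, so that $\bF\btheta = (\btau\circ\pp)\bF^{-\star}$ and hence $\rho_{\bgamma_{0}}\bF\btheta\bgamma_{0} = \rho_{\bgamma_{0}}(\btau\circ\pp)\bF^{-\star}\bgamma_{0} = \widehat{\mpi}$.

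The only delicate point is the bookkeeping of adjoints and index positions in this two-point-tensor setting: one must keep straight that $\widehat{\bsigma}$ and $\widehat{\btau}$ are $\bq$-self-adjoint endomorphisms of $T\espace$, that $\bF^{-\star}$ maps $T^{\star}\body$ to $T^{\star}\espace$, and that $\widehat{\mpi}\colon T\body\to T\espace$ is a two-point tensor valued in $\espace$. I expect the self-adjointness transfer, which is used twice, to be the step most prone to transpose errors; the genuine geometric substance is entirely carried by Theorem~\ref{thm:generalized-Nanson}.
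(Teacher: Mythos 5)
Your proposal is correct and follows essentially the same route as the paper's own proof: transfer $\widehat{\bsigma}$ (equivalently $\widehat{\btau}$) across the dot product by $\bq$-self-adjointness, pull back via the change-of-variables formula, invoke Theorem~\ref{thm:generalized-Nanson} with $\widehat{\btau}\ww$ in place of $\ww$, and identify $\widehat{\mpi}$ using $\btau\circ\pp = \bF\btheta\bF^{\star}$ from Example~\ref{exam:pullbacks-order-two}. Note that the expression you derive, $\widehat{\mpi} = \rho_{\bgamma_{0}}(\btau\circ\pp)\bF^{-\star}\bgamma_{0}$, is the type-correct one (it agrees with $\rho_{\bgamma_{0}}\bF\btheta\bgamma_{0}$ and with the paper's proof); the factor $\bF^{-\star}\bq$ appearing in the lemma's statement is a typo, since $\bq$ does not map $T\body$ to $T^{\star}\body$.
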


\begin{rem}\label{rem:DL-on-the-body}
  Thanks to this result, the dead load boundary condition \eqref{eq:CLDL} rewrites on the body
  \begin{equation*}
    \left.\widehat{\bpi} \bN_{0}\right|_{\Sigma_\body^{(DL)}} = \vec t_{0} \circ \pp_{0},
    \qquad
    \Sigma_\body^{(DL)}=\pp_{0}^{-1}( \Sigma_{0}^{(DL)}).
  \end{equation*}
\end{rem}

\begin{proof}[Proof of lemma~\ref{lem:PK-body}]
  By the symmetry of $\bsigma$ and the change of variable formula, we get
  \begin{equation*}
    \int_{\partial \Omega_{\pp}} (\widehat{\bsigma}\nn \cdot \ww) \, da = \int_{\partial \Omega_{\pp}} (\nn \cdot \widehat{\bsigma}\ww) \, da = \int_{\partial \body} \pp^{*}((\nn \cdot \widehat{\bsigma}\ww)\, da) = \int_{\partial \body} \pp^{*}(\rho(\nn \cdot \widehat{\btau}\ww)\, da).
  \end{equation*}
  Therefore, using theorem~\ref{thm:generalized-Nanson}, we get
  \begin{equation*}
    \pp^{*}((\widehat{\bsigma}\nn \cdot \ww)\, da) = \pp^{*}(\rho(\nn \cdot \widehat{\btau}\ww)\, da) = \rho_{\bgamma_{0}} ((\widehat{\btau}\ww)\circ\pp \cdot \bq^{-1}\bF^{-\star}\bgamma_{0} \bN_{0}) da_{\bgamma_{0}}.
  \end{equation*}
  But
  \begin{equation*}
    (\widehat{\btau}\ww)\circ\pp \cdot \bq^{-1}\bF^{-\star}\bgamma_{0} \bN_{0} = \ww\circ\pp \cdot (\widehat{\btau}\circ\pp)\bq^{-1}\bF^{-\star}\bgamma_{0} \bN_{0} = \ww\circ\pp \cdot (\btau\circ\pp)\bF^{-\star}\bgamma_{0} \bN_{0},
  \end{equation*}
  and $\btau\circ\pp = \bF\btheta\bF^{\star}$, by the very definition of the push forward $\btau = \pp_{*}\btheta$ (see~\ref{exam:pullbacks-order-two}).
  We get thus finally
  \begin{equation*}
    \int_{\partial \Omega_{\pp}} (\widehat{\bsigma}\nn \cdot \ww) \, da = \int_{\partial \body}  \delta\pp \cdot \widehat{\bpi}\bN_{0}\, da_{\bgamma_{0}},
  \end{equation*}
  where
  \begin{equation*}
    \widehat{\bpi} := \rho_{\bgamma_{0}}(\btau\circ\pp)\bF^{-\star}\bgamma_{0} = \rho_{\bgamma_{0}}\bF\btheta\bgamma_{0},
  \end{equation*}
  which concludes the proof.
\end{proof}


\end{document}